\documentclass[10pt]{article}
\usepackage{geometry}
\geometry{
    a4paper,
    total={170mm,257mm},
    left=20mm,
    top=20mm,
}
\usepackage{authblk} 

\usepackage{microtype}
\DeclareMathAlphabet{\mathbbold}{U}{bbold}{m}{n}

\usepackage[usenames,dvipsnames,svgnames]{xcolor}
\definecolor{light-gray}{gray}{0.90}
\definecolor{lightblue}{rgb}{.90,.95,1}
\definecolor{lightred}{rgb}{1,.5,0}
\definecolor{mygreen}{rgb}{0,0.6,0}
\definecolor{mygray}{rgb}{0.5,0.5,0.5}
\definecolor{mymauve}{rgb}{0.58,0,0.82}
\definecolor{armygreen}{rgb}{0.29, 0.33, 0.13}
\definecolor{aqua}{rgb}{0.0, 1.0, 1.0}
\definecolor{blue-violet}{rgb}{0.54, 0.17, 0.89} 	 	
\definecolor{cerulean}{rgb}{0.0, 0.48, 0.65} 
\definecolor{coolblack}{rgb}{0.0, 0.18, 0.39}
\definecolor{coquelicot}{rgb}{1.0, 0.22, 0.0}
\definecolor{coral}{rgb}{1.0, 0.5, 0.31}
\definecolor{darkgoldenrod}{rgb}{0.72, 0.53, 0.04}

\usepackage{amsmath,amssymb,amsfonts,latexsym,amsthm}
\theoremstyle{plain}
\newtheorem{proposition}{Proposition}
\newtheorem{fact}{Fact}
\newtheorem*{notation*}{Notation and terminology}
\newtheorem{theorem}{Theorem}
\newtheorem{corollary}{Corollary}
\newtheorem{lemma}{Lemma}
\newtheorem{definition}{Definition}
\newtheorem{remark}{Remark}

\usepackage{stmaryrd}

\usepackage[goodsyntax]{virginialake}
\usepackage{hyperref}
\hypersetup{
    colorlinks,
    linkcolor={red!50!black},
    citecolor={blue!50!black},
    urlcolor={blue!80!black}
}

\usepackage{float}
\floatstyle{boxed} 
\restylefloat{figure}

\usepackage{graphicx}
\usepackage{wrapfig}

\usepackage{txfonts} 
\usepackage{subcaption} 
\usepackage{xspace}
\usepackage{rotating}
\usepackage{setspace}
\usepackage{rotating}
\usepackage{cmll} 
\usepackage{multirow}
\usepackage{tikz}
\usepackage{adjustbox}
\usepackage{dsfont} 
\usepackage{tabularx}

\usepackage{tikz}
\usetikzlibrary{decorations.markings}
\usetikzlibrary{shapes.geometric, matrix}
\usetikzlibrary{matrix}
\usetikzlibrary{calc}
\pagestyle{empty}

\pgfdeclarelayer{edgelayer}
\pgfdeclarelayer{nodelayer}
\pgfsetlayers{edgelayer,nodelayer,main}

\tikzstyle{none}=[inner sep=1pt]

\tikzstyle{simple}=[-,draw=Black,line width=.500]
\tikzstyle{boxed} = [-,draw=Black, rectangle, text centered, minimum 
height=8mm,node distance=10em]
\tikzstyle{arrowthick}=[-,draw=Black,postaction={decorate},decoration={markings,mark=at
 position 1 with {\arrow{>}}},line width=1.500]
\tikzstyle{arrow}=[-,draw=Black,postaction={decorate},decoration={markings,mark=at
 position .5 with {\arrow{>}}},line width=.500]
\tikzstyle{tick}=[-,draw=Black,postaction={decorate},decoration={markings,mark=at
 position .5 with {\draw (0,-0.1) -- (0,0.1);}},line width=.500]

\renewcommand{\frac}[2]{ %
\vlinf{}{}
{#2}
{#1}}

\newcommand\APEX[5]{
\ifthenelse{\boolean{proofintheappendix}}

{\ifthenelse{\equal{#2}{}}{\begin{#1}}{\begin{#1}[#2]}\label{#3}
\leavevmode\marginpar{\textit{\small Proof at page
\pageref{APXZ#3}.}}{#4}\end{#1}\immediate\write\tempfile{
    \unexpanded{\subsubsection*{Proof of \uppercase{#1} \ref{#3}, page
\pageref{#3}}\label{APXZ#3}{#5}}}
    }{\ifthenelse{\boolean{bodyafterstatement}}

{\ifthenelse{\equal{#2}{}}{\begin{#1}}{\begin{#1}[#2]}\label{#3}{#4}\end{#1}{#5}}

{\ifthenelse{\equal{#2}{}}{\begin{#1}}{\begin{#1}[#2]}\label{#3}{#4}\end{#1}}
 }}

 %

 %
 %
 %


\newcommand{\Set}[1]{ \left\lbrace #1 \right\rbrace} 
\newcommand{\qedh}{\hfill $ \bullet $}

\newcommand{\ie}{i.e.\xspace}
\newcommand{\Ie}{I.e.\xspace}

\definecolor{Blue}{rgb}{0,0,1}

 %
 %

\vllineartrue


%
%
%

%
%
%

%
%
%

%
%
%
\renewcommand{\vlne}[1]{\overline{#1}}
%
 


\newcommand{\bTT}{\textcolor{blue}{\texttt{T}}}
\newcommand{\bFF}{\textcolor{red} {\texttt{F}}}













\newcommand{\atma}{a}

\newcommand{\natma}{\vlne\atma}

\hyphenation{
    co-multi-pli-ca-tion
}

\bibliographystyle{plainurl}




\begin{document}
\title{Subatomic systems need not be subatomic}
\author{Luca Roversi}
\date{%
\small
Dipartimento di Informatica -- Università di Torino\\
C.so Svizzera 185, 10149 Torino, ITALY\\
roversi@di.unito.it, lroversi@unito.it\\
\url{https://orcid.org/0000-0002-1871-6109}\\[2ex]
\normalsize
\today
}

\maketitle

\begin{abstract}
Subatomic systems were recently introduced to identify the structural  
principles underpinning the normalization of proofs. ``Subatomic'' means 
that we can reformulate logical systems in accordance with two 
principles. Their atomic formulas become instances of sub-atoms, i.e. of 
non-commutative self-dual relations among logical constants, and their rules 
are derivable by means of a unique deductive scheme, the medial shape. One of 
the neat results is that the cut-elimination of subatomic systems implies the 
cut-elimination of every standard system we can represent sub-atomically.

We here introduce Subatomic systems-1.1. They relax and widen the properties 
that the sub-atoms of Subatomic systems can satisfy while maintaining the use 
of the medial shape as their only inference principle. Since sub-atoms can 
operate directly on variables we introduce $ \mathsf{P} $. The cut-elimination 
of $ \mathsf{P} $ is a corollary of the cut-elimination that we prove for  
Subatomic systems-1.1.  Moreover, $ \mathsf{P} $ is sound and complete with 
respect to the clone at  the top of Post's Lattice. I.e. $ \mathsf{P} $ proves 
all and only the tautologies that contain conjunctions, disjunctions and 
projections. So, $ \mathsf{P} $ extends Propositional logic without any 
encoding of its atoms as sub-atoms of $ \mathsf{P} $. 

This shows that the logical principles underpinning Subatomic systems also 
apply outside the sub-atomic level which they are conceived to work at.
We reinforce this point of view by introducing the set of medial shapes 
$\mathsf{R_{23}} $. The formulas that the rules in $\mathsf{R_{23}} $ deal with 
belong to the union of two disjoint clones of Post's Lattice.  The SAT-problem 
of the first clone is in P-Time. The SAT-problem of the other is NP-Time 
complete. So, $ \mathsf{R_{23}}$ and the proof technology of Subatomic systems 
could help to identify proof-theoretical properties that highlight the phase 
transition from P-Time to NP-Time complete satisfiability.
\end{abstract}

\section{Introduction}
\label{section;Introduction}
\vllinearfalse
Subatomic systems were recently introduced to identify the structural  
principles underpinning the normalization of proofs.  
``Subatomic'' means that we can reformulate logical systems in accordance with 
two principles. The atomic constituents of the formulas become instances of 
sub-atoms, i.e. of non-commutative self-dual relations among logical constants, 
and the rules are derivable by means of a unique deductive scheme, the medial 
shape.
In its not full, but general enough, form it is:
    \begin{align*}
    &&
     \vlinf{}{}
     {(A\vlbin{\beta}C)\vlbin{\alpha}(B\vlbin{\delta}D)}
     {(A\vlbin{\alpha}B)\vlbin{\beta}(C\vlbin{\gamma}D)}
    \end{align*}
where $ A, B, C, D $ are formulas and $ \alpha, \beta, \gamma, \delta $ 
relations. 
For example, let us focus on propositional logic. The sub-atomic rule
$ 
{\vlinf{}{}
{\vls(A\vlan C)\vlbin{\vlor}(B\vlbin{\vlor}D)}
{\vls(A\vlbin{\vlor}B)\vlbin{\vlan}(C\vlbin{\vlor}D)}} $
stands for the introduction to the right of the conjunction. It is a rule in 
deep inference which we can read as follows. Let 
$ A\vlbin{\vlor}B $ and
$ C\vlbin{\vlor}D $ be two given disjunctions where $ B $ is the premise that 
allows to derive $ A $ and $ D $ the one for deriving $ C  $. Then, the rule 
derives $ A\vlbin{\vlan}C $ from the premise
$ B\vlbin{\vlor}D $. The sub-atomic 
rule 
$ \vlinf{}{}
{\vls(\mathtt{f}\vlbin{\mathbf{a}}\mathtt{t})
    \vlbin{\vlor}
    (\mathtt{t}\vlbin{\mathbf{a}}\mathtt{f})}
{\vls(\mathtt{f}\vlbin{\vlor}\mathtt{t})
    \vlbin{\mathbf{a}}
    (\mathtt{t}\vlbin{\vlor}\mathtt{f})}$ represents the
excluded-middle 
$\vlupsmash
{\vlinf{}{}
{\atma\vlor\natma}
{\ttt}}$.
The sub-atoms $ \vls(\mathtt{f}\vlbin{\mathbf{a}}\mathtt{t}) $ and
$ (\mathtt{t}\vlbin{\mathbf{a}}\mathtt{f}) $ stand for the
atoms $ a $ and $ \vlne{a} $, respectively, where $ \mathbf{a} $ is a self-dual 
non commutative relation which obeys the equivalence 
$ \vls(\mathtt{f}\vlbin{\vlor}\mathtt{t})
  \vlbin{\mathbf{a}}
  (\mathtt{t}\vlbin{\vlor}\mathtt{f}) =
  \mathtt{t}\vlbin{\mathbf{a}} \mathtt{t} = \mathtt{t}$.
Instead, the rule
$ \vlinf{}{}
{\vls(\mathtt{f}\vlbin{\vlor}\mathtt{f})
    \vlbin{\mathbf{a}}
    (\mathtt{t}\vlbin{\vlor}\mathtt{t})} 
{\vls(\mathtt{f}\vlbin{\mathbf{a}}\mathtt{t})
    \vlbin{\vlor}
    (\mathtt{f}\vlbin{\mathbf{a}}\mathtt{t})}
$
corresponds to the contraction
$\vlupsmash
{\vlinf{}{}
{\atma}
{\atma\vlor\atma}}$. Under the same representation of $ a $ as before, 
the conclusion represents $ a $ up to the standard equivalences 
$ \mathtt{f}\vlbin{\vlor}\mathtt{f} = \mathtt{f}$
and $ \mathtt{t}\vlbin{\vlor}\mathtt{t} = \mathtt{t} $.


One reason why Subatomic systems are a deep inference formalism is that
they target the representation of a class of logical systems as wide as possible
which may well include self-dual non-commutative logical operators and we know 
that there cannot be analytic and complete Gentzen (linear) proof systems with
self-dual non-commutative connectives in them \cite{DBLP:journals/lmcs/Tiu06}.
Another reason is that, by means of the uniform representation they allow, 
Subatomic systems help to identify sufficient conditions to characterize 
proof systems that enjoy decomposition, i.e. the reorganization of  
contractions inside a proof, and cut-elimination. This is possible because 
Subatomic systems abstract at the right level the proofs of decomposition and 
of cut-elimination that the literature contains in relation to deep inference 
logical systems for classical, modal, linear and sub-structural logics.

Very briefly, deep inference looks at deductive processes as rewriting 
procedures where rules apply to an arbitrary depth in the syntax tree of 
formulas. This is equivalent to saying that deep inference logical systems 
compose derivations and formulas exactly with the same set of logical 
connectives. Subatomic systems witness how effective  the reduction can be of 
syntactic bureaucracy that follows from the deep inference approach to proof 
theory  to get closer to the semantic nature of proof and proof normalization.
An informative survey about deep inference is \cite{Gugl:14:Deep-Inf:fj}.
An up-do-date information about its literature is \cite{Gugl::Deep-Inf:uq}.

This paper introduces Subatomic systems-1.1 
(Section~\ref{section:Generalizing subatomic systems}), a slight generalization 
of the original Subatomic system in 
\cite{Tubella2016,Tubella:2018:SPS:3176362.3173544} that we dub as version 1.0, 
for easiness of reference. Version 1.1 relaxes and widen the properties 
that the sub-atoms of version 1.0 can satisfy while maintaining the use 
of the medial shape as the only inference principle. As effect of the 
generalization, the formulas of 
Subatomic systems 1.1 build also on variables. Hence, we can introduce 
$ \mathsf{P} $ (Section~\ref{section:The system P}.)
We show that $ \mathsf{P} $ is sound and complete with respect to the 
clone at the top of Post's Lattice (Section~\ref{section:Subatomic systems and 
Post's Lattice}.) I.e. $ \mathsf{P} $ proves all and only the tautologies that 
contain conjunctions, disjunctions and the self-dual projections $ \pi_0$ and 
$ \pi_1 $. So, $ \mathsf{P} $ extends Propositional logic without any 
encoding of its atoms as sub-atoms of $ \mathsf{P} $.
We also prove that the cut and other rules are admissible for a specific 
fragment of $ \mathsf{P} $ (Section~\ref{section:Splitting for Pdownarrow}.)
The proof is a corollary of the same property that we prove for version 1.1 and 
which extends the one for version 1.0 (Section~\ref{section:Splitting in 
subatomic systems-1.1}.)

The existence of $ \mathsf{P} $ shows that the logical principles 
underpinning  Subatomic systems also apply outside the sub-atomic level which 
they are conceived to work at. We reinforce this idea by introducing the set 
$\mathsf{R_{23}} $ of medial shapes (Section~\ref{section:conclusion}.)
The formulas that occur in the rules of $ \mathsf{R_{23}} $ belong to the union 
of the two clones $ \mathsf{C_2} $ and $ \mathsf{C_3} $ of Post's Lattice 
\cite{Post1941}. Both $ \mathsf{C_2} $ and $ \mathsf{C_3} $ are two of the five 
maximal clones strictly contained in $ \mathsf{C_1} $. The logical operators 
that build the formulas of $ \mathsf{C_2} $ and $ \mathsf{C_3} $ are strongly 
interrelated but the satisfiability problem for $ \mathsf{C_2} $ is in P-Time 
while the one for $ \mathsf{C_3} $ is NP-Time complete. That $ \mathsf{R_{23}} 
$ can be a Subatomic system-1.1 is still an open question. The conjecture is 
that we need a further extension of Subatomic systems to prove a 
cut-elimination for a system with $ \mathsf{R_{23}} $ as its core. The 
relevance of $ \mathsf{R_{23}}$ is twofold. 
On one side, it can help focusing on proof-theoretical properties that 
highlight how and when the phase transition from the satisfiability in P-Time 
to the satisfiability in the class of NP-Time complete problems occurs.
On the other, the way we obtain $\mathsf{R_{23}} $ strongly suggests that 
Subatomic systems can be viewed as a framework where looking for grammars 
that follow a very regular pattern able to generate possibly interesting 
logical systems, so contributing to the so called systematic proof theory 
\cite{DBLP:conf/csl/CiabattoniST09}. The side effect would be that the larger 
will be the class of interesting logical systems that we can generate by means 
of Subatomic system, the clearer the reason could be why the medial scheme rule 
is so pervasive, something that, so far, has no a priory convincing explanation.

\section{Subatomic systems-1.1}
\label{section:Generalizing subatomic systems}
We generalize Subatomic systems-1.0 
\cite{Tubella2016,Tubella:2018:SPS:3176362.3173544} to Subatomic systems-1.1.

\vllinearfalse
\noindent
\begin{definition}[Subatomic systems-1.1]
\label{definition:A possible generalization of Subatomic systems}
Let $ \mathcal{U} $ be a denumerable set of constants $ t, u, v, \ldots $.
Let $ \mathcal{V} $ be a denumerable set of variables $ x, y, w, \ldots $.
Let $ \mathcal{R} $ be a denumerable set of symbol relations
$ \alpha, \beta, \ldots $ and let $ \prec\ \subseteq \mathcal{R}^2$ be a 
partial order among the symbols in $ \mathcal{R} $.
Let $ \mathcal{F} ::= \mathcal{U} \mid \mathcal{V} \mid \mathcal{F}\, 
\mathcal{R}\, \mathcal{F} $ 
generate formulas $ A, B, C,\ldots $.
Let
$\vlne{(\ \ )}:(\mathcal{U}\rightarrow\mathcal{U})\cup
(\mathcal{V}\rightarrow\mathcal{V})\cup
(\mathcal{R}\rightarrow\mathcal{R})$ be an involutive \emph{negation}: 
\begin{align*}
&&\vlne{A} & = 
\begin{cases}
\ \vlne{u} & \textrm{if } A = u
\textrm{ and } u \in \mathcal{U}
\\
\ \vlne{x} & \textrm{if } A = x
\textrm{ and } x \in \mathcal{V}
\\
\ \vlne{B} \vlbin{\vlne{\alpha}} \vlne{C} 
& \textrm{if } A = B \vlbin{\alpha} C
\textrm{ and } B \vlbin{\alpha} C \in \mathcal{F}
\end{cases}
\enspace. 
\end{align*}
Fixed $ n\in\mathbb{N} $, let $ =\ \subseteq \mathcal{F}^2 $ be the least 
congruence on $ \mathcal{F} $ generated by any subset $ E_1 = F_1, \ldots, E_n 
= F_n$ of axioms taken among following:
\begin{align}
\label{align: associativity ax}
(A\vlbin{\alpha}B)\vlbin{\alpha}C & = A\vlbin{\alpha}(B\vlbin{\alpha}C) 
&& (A, B, C\in\mathcal{F}
   ,\ \alpha\in\mathcal{R})\\
\label{align: commutativity ax}
A\vlbin{\alpha}B  & = B\vlbin{\alpha}A  
&& (A, B \in\mathcal{F}
   ,\ \alpha\in\mathcal{R})\\
\label{align:unitary ax}
A\vlbin{\alpha}u_{\alpha} & = A = u_{\alpha}\vlbin{\alpha}A      
&& (A\in\mathcal{F}
   ,\ u_{\alpha}\in\mathcal{U}
   ,\ \alpha\in\mathcal{R})\\
\label{align:right weakening ax}
A\vlbin{\alpha}B  & = A           
&& (A, B \in\mathcal{F}
   ,\ \alpha\in\mathcal{R})\\
\label{align:left weakening ax}
B\vlbin{\alpha}A  & = A                                 
&& (A, B \in\mathcal{F}
   ,\ \alpha\in\mathcal{R})
\\
\label{align:constant constant assignment  ax}
t\vlbin{\alpha}u  & = v
&& (t, u, v\in\mathcal{U}
   ,\ \alpha\in\mathcal{R})\\
\label{align:variable constant assignment  ax}
x\vlbin{\alpha}y  & = z
&& (x, y, z\in\mathcal{V}
   ,\ \alpha\in\mathcal{R})\\
\label{align:variable variable constant assignment ax}
x\vlbin{\alpha}y  & = u
&& (x, y\in\mathcal{V}, u\in\mathcal{U}
   ,\ \alpha\in\mathcal{R})\\
\label{align:constant equation ax}
t                 & = u
&&  (t, u\in\mathcal{U})
\enspace.
\end{align}
\noindent
A Subatomic system-1.1 $ \mathsf{S} $ on $ \mathcal{F} $, $ \mathcal{R}, \prec 
$ and $ = $ has all and only the instances of the following schemes:
\begin{center}
{\setlength{\extrarowheight}{1.5em}%
    \begin{tabular}{r||c|c}
        & \textbf{Down-rules} & \textbf{Up-rules}\\\hline\hline
        \textbf{Splitting}   &
        $\vlinf{}{(\alpha \prec \beta)}
        {(A \vlbin{\alpha} C) \vlbin{\beta} \left(B \vlbin{\vlne{\alpha}}
        D\right)}
        {(A \vlbin{\beta} B)      \vlbin{\alpha} (C \vlbin{\beta} D)}
        $ &
        $\vlinf{}{(\alpha \succ \beta)}
        {(A \vlbin{\beta} B) \vlbin{\alpha}      (C \vlbin{\beta} D)}
        {(A \vlbin{\alpha} C) \vlbin{\beta} \left(B 
        \vlbin{\vlne{\alpha}} D\right)}
        $
        \\[1.5em]\hline
        \textbf{Contractive} &
        $\vlinf{}{(\alpha \prec \beta)}
        {(A \vlbin{\beta} C) \vlbin{\alpha} (B \vlbin{\beta} D)}
        {(A \vlbin{\alpha} B)      \vlbin{\beta} (C \vlbin{\alpha} D)}
        $ &
        $\vlinf{}{(\alpha \succ \beta)}
        {(A \vlbin{\alpha} B) \vlbin{\beta} (C \vlbin{\alpha} D)}
        {(A \vlbin{\beta} C) \vlbin{\alpha} (B \vlbin{\beta} D)}
        $
        \\[1.5em]\hline
        \textbf{Equational} &
        $\vliqf{}{}
        {F_i}
        {E_i}
        $ &
        $\vliqf{}{}
        {\vlne{F_i}}
        {\vlne{E_i}}
        $
\end{tabular}}
\end{center}
    \qedh
\end{definition}

Like in \cite{Tubella2016,Tubella:2018:SPS:3176362.3173544}, the role of 
Definition~\ref{definition:A possible generalization of Subatomic 
systems} is to delineate the formal framework we are going to work in. The 
constraints on the framework are very lax. It should not surprise how simple is 
to think of semantically meaningless instances of Subatomic systems where, for 
example, the two propositional constants $ \bTT $ (true) and $ \bFF $ (false) 
exist and are equated by an instance of~\eqref{align:constant equation ax}.

The framework we delineate is slightly more general than the one in 
\cite{Tubella2016,Tubella:2018:SPS:3176362.3173544}. The language $ \mathcal{F} 
$ contains variables of $ \mathcal{V} $ and the set of axioms is extended in 
two directions. Axioms~\eqref{align:right weakening ax}, \eqref{align:left 
weakening  ax} admit the existence of relations that erase structure. 
Axioms~\eqref{align:variable constant assignment  ax} and~\eqref{align:variable 
variable constant assignment ax} allow the existence of relations among 
constants and variables. This extends the proof theoretical 
technology of Subatomic systems-1.0 outside its intrinsic sub-atomic nature.

\begin{notation*}
Let $ \mathsf{S} $  be a Subatomic system-1.1 with formulas $ \mathcal{F} $ 
built on the symbols in $ \mathcal{R} $. Let $ \prec $ be the order relation on 
$ \mathcal{R}^2 $. 
A context $ S\!\Set{\, } $ is a formula $ A\in\mathcal{F} $ with any of its 
sub-formulas, possibly $ A $ itself, erased. In the last case $ S\!\Set{\, } $ 
is $\Set{\, } $.
A relation $ \alpha $ of $ \mathsf{S} $ is \emph{unitary} if it enjoys 
axiom~\eqref{align:unitary ax}.
A relation $ \alpha $ is a \emph{right weakening} if it 
enjoys~\eqref{align:right weakening ax} and is a \emph{left weakening} 
if~\eqref{align:left weakening ax} holds for it.
A relation $ \alpha\in\mathcal{R} $ is \emph{strong} if no $\beta\in\mathcal{R} 
$ exists such that $ \beta \prec \alpha $.
A relation $ \alpha\in\mathcal{R} $ is \emph{weak} if no $ \beta\in\mathcal{R} 
$ exists such that $ \alpha \prec \beta $.
The map $\vlne{(\ \ )}$ is $ \prec$-\emph{consistent} if a strong 
$\alpha\in\mathcal{R} $ implies that $ \vlne{\alpha} $ is weak, and vice versa.
A \emph{derivation} $ {\vlder{\mathcal{D}}{\mathsf{S}}{B}{A}} $ of $\mathsf{S}$ 
from $ A $ to $ B $ is any obvious concatenation of rules instances of $ 
\mathsf{S} $.
\qedh
\end{notation*}

\begin{remark}
    Strong relations are defined as minimal elements of the partial order
    $ \prec\ \subseteq \mathcal{R}^2 $. Dually, weak relations
    are maximal elements. We share this terminological choice with 
    \cite{Tubella:2018:SPS:3176362.3173544}. The justification is 
    semantical. A relation is strong if its truth implies the 
    truth of a weaker one. For example, the classical conjunction is strong 
    and the classical disjunction weak.
    \qedh
\end{remark}

\begin{proposition}[Excluded middle]
    \label{proposition:subatominc systems 1.1 excluded middle}
    Let $ \mathsf{S} $ be a Subatomic system-1.1 with $ = $ as its 
    equational theory.
    Let $ \alpha\in\mathcal{R} $ be \emph{strong}.
    Let the following instances 
    of~\eqref{align:constant constant assignment  ax} 
    and~\eqref{align:variable variable constant assignment ax} 
    hold in $ \mathsf{S} $:
    \begin{align}
    \label{align:generic excluded middle 01}
    && v \vlbin{\vlne{\alpha}} \vlne{v} & = u_{\alpha} 
      && (\forall v\in\mathcal{U}) \\
    \label{align:generic excluded middle 02}
    && u_\alpha \vlbin{\gamma} u_\alpha & = u_{\alpha} 
      && (\forall \gamma \in\mathcal{R}. \gamma \prec \vlne{\alpha})\\
    \label{align:generic excluded middle 00}
    && x \vlbin{\vlne{\alpha}} \vlne{x} & = u_{\alpha} 
      && (\forall x\in\mathcal{V})
    \end{align}
    where $ u_{\alpha} $ is a single and distinguished element of 
    $ \mathcal{U} $. The rule 
    $ 
    \upsmash{\downsmash{\vlinf{}{}{A\vlbin{\vlne{\alpha}}\vlne{A}}{u_{\alpha}}}}
     $\ is derivable, for every $ A\in\mathcal{F} $.
\end{proposition}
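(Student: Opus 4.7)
The plan is to prove the claim by structural induction on $A\in\mathcal{F}$. In each case the goal is to produce a derivation whose top formula is $u_\alpha$ and whose bottom formula is $A\vlbin{\vlne{\alpha}}\vlne{A}$.

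The two base cases are immediate via a single Equational down-rule. If $A=u\in\mathcal{U}$, axiom \eqref{align:generic excluded middle 01} specialised to $v=u$ gives a one-step derivation from $u_\alpha$ to $u\vlbin{\vlne{\alpha}}\vlne{u}$; if $A=x\in\mathcal{V}$, axiom \eqref{align:generic excluded middle 00} yields the analogous derivation terminating on $x\vlbin{\vlne{\alpha}}\vlne{x}$.

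For the inductive step $A=B\vlbin{\beta}C$, the induction hypothesis supplies derivations $\vlder{\mathcal{D}_B}{\mathsf{S}}{B\vlbin{\vlne{\alpha}}\vlne{B}}{u_\alpha}$ and $\vlder{\mathcal{D}_C}{\mathsf{S}}{C\vlbin{\vlne{\alpha}}\vlne{C}}{u_\alpha}$. I build the required derivation top-down as follows: start from $u_\alpha$; apply the Equational down-rule backwards against axiom \eqref{align:generic excluded middle 02} to expand $u_\alpha$ into $u_\alpha\vlbin{\beta}u_\alpha$ (legal precisely because $\beta\prec\vlne{\alpha}$); splice $\mathcal{D}_B$ into the context $\Set{\,}\vlbin{\beta}u_\alpha$ and then $\mathcal{D}_C$ into $(B\vlbin{\vlne{\alpha}}\vlne{B})\vlbin{\beta}\Set{\,}$, reaching $(B\vlbin{\vlne{\alpha}}\vlne{B})\vlbin{\beta}(C\vlbin{\vlne{\alpha}}\vlne{C})$; and close with a single Splitting down-rule instantiated at $\alpha':=\beta$ and $\beta':=\vlne{\alpha}$, whose side condition is exactly $\beta\prec\vlne{\alpha}$ and whose conclusion unifies with $(B\vlbin{\beta}C)\vlbin{\vlne{\alpha}}(\vlne{B}\vlbin{\vlne{\beta}}\vlne{C})=A\vlbin{\vlne{\alpha}}\vlne{A}$, as required.

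The only delicate point, and hence the main obstacle, is the side-condition bookkeeping: both the use of \eqref{align:generic excluded middle 02} and the final Splitting step demand $\beta\prec\vlne{\alpha}$ for every binary relation $\beta$ occurring in $A$. The induction therefore goes through cleanly when $\vlne{\alpha}$ is the top element of $\prec$, which is the natural situation attached to $\alpha$ being \emph{strong} together with $\prec$-consistency of the negation map. In the general case, either the statement should be restricted to the sub-language built from relations strictly below $\vlne{\alpha}$, or the borderline $\beta=\vlne{\alpha}$ should be handled separately by a rearrangement using the associativity and commutativity axioms \eqref{align: associativity ax}--\eqref{align: commutativity ax} whenever those belong to the equational theory chosen for $\mathsf{S}$.
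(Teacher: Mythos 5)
Your proof is correct and follows essentially the same route as the paper's: structural induction on $A$, base cases discharged by axioms~\eqref{align:generic excluded middle 01} and~\eqref{align:generic excluded middle 00}, and the inductive step obtained by expanding $u_\alpha$ into $u_\alpha\vlbin{\gamma}u_\alpha$ via~\eqref{align:generic excluded middle 02}, applying the two inductive hypotheses in parallel, and closing with one splitting down-rule under the side condition $\gamma\prec\vlne{\alpha}$. The side-condition worry you raise in your last paragraph is legitimate but is equally present in the paper's own proof, which simply annotates the splitting step with $(\gamma\prec\vlne{\alpha})$ and appeals to $\alpha$ strong implying $\vlne{\alpha}$ weak (via $\prec$-consistency), so you have not diverged from, but rather made explicit, an assumption the paper leaves tacit.
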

\begin{proof}
    The proof is by induction on the structure of $ A $. 
    The two base cases with  $ A = x $ or $ A = v $ hold
    because~\eqref{align:generic excluded middle 00} and~\eqref{align:generic 
    excluded middle 01} hold in the given $ \mathsf{S} $.
    Let $ A $ be $ A_0 \vlbin{\gamma} A_1 $ where, we underline, $ \gamma  $ 
    can also be $ \alpha $ itself. Moreover, $ \alpha $ strong implies that
    $ \vlne{\alpha} $ weak. Then\
    $
    {\vlderivation{
            \vliq{}{}
            {(A_0\vlbin{\gamma}A_1)\vlbin{\vlne{\alpha}}
             (\vlne{A_0\vlbin{\gamma}A_1})}
            {\vlin{}{(\gamma\prec\vlne{\alpha})}
            {(A_0\vlbin{\gamma}A_1)\vlbin{\vlne{\alpha}}
             (\vlne{A_0}\vlbin{\vlne{\gamma}}\vlne{A_1})}
            {\vlin{}{\textrm{\scriptsize inductive hypothesys}}
            {(A_0\vlbin{\vlne{\alpha}}\vlne{A_0})\vlbin{\gamma}
             (A_1\vlbin{\vlne{\alpha}}\vlne{A_1})}
            {\vliq{\eqref{align:generic excluded middle 02}}{}
            {u_{\alpha}\vlbin{\gamma}u_{\alpha}}
            {\vlhy{u_{\alpha}}}}}}
    }} $.
\end{proof}
\noindent
Proposition~\ref{proposition:subatominc systems 1.1 excluded middle} 
justifies the following:
\begin{definition}[Unit]
\label{definition:Unity]}
The constant $ u_\alpha 
\in\mathcal{U} $ is a \emph{unit} if it enjoys 
axioms~\eqref{align:generic excluded middle 01}, 
\eqref{align:generic excluded middle 02} 
and~\eqref{align:generic excluded middle 00}.
\qedh
\end{definition}

\begin{proposition}[Contraction]
    \label{proposition:subatominc systems 1.1 idempotence}
    Let $ \mathsf{S} $ be a Subatomic system-1.1 with $ = $ as its equational 
    theory.  Let $ \beta\in\mathcal{R} $ be \emph{weak}. Let the 
    following instances
    of~\eqref{align: associativity ax},
    \eqref{align: commutativity ax},
    \eqref{align:constant constant assignment  ax} 
    and~\eqref{align:variable constant assignment  ax} 
    hold in $ \mathsf{S} $:
    \begin{align}
    \label{align:generic idempotence}
    && (A \vlbin{\beta} B)\vlbin{\beta} C  & =  
       A \vlbin{\beta} (B\vlbin{\beta} C)  && (\forall A, B, C\in\mathcal{F}) 
       \\
    \label{align:generic idempotence 0}
    && A         \vlbin{\beta}  B  & =  
    B         \vlbin{\beta}  A  && (\forall A, B\in\mathcal{F}) \\
    \label{align:generic idempotence 00}
    && v         \vlbin{\beta}  v  & = v && (\forall v\in\mathcal{U}) \\
    \label{align:generic idempotence 01}
    && x         \vlbin{\beta}  x  & = x && (\forall x\in\mathcal{V}) 
    \enspace .
    \end{align}
    The rule $\upsmash{\downsmash{\vlinf{}{}{A}{A\vlbin{\beta} A}}} $ is 
    derivable, for 
    every $ 
    A\in\mathcal{F} $.
\end{proposition}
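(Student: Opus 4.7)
The plan is to proceed by structural induction on $A$, mirroring the pattern of the proof of Proposition~\ref{proposition:subatominc systems 1.1 excluded middle}. In place of the excluded-middle derivation, we will build a derivation whose conclusion is $A$ and whose premise is $A\vlbin{\beta}A$, using the contractive down-rule as the engine for the inductive step and the equational axioms~\eqref{align:generic idempotence 00} and~\eqref{align:generic idempotence 01} for the base cases.

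For the base cases, when $A=v\in\mathcal{U}$ or $A=x\in\mathcal{V}$, the desired rule collapses to a single equational step, justified directly by~\eqref{align:generic idempotence 00} or~\eqref{align:generic idempotence 01}, respectively. For the inductive step, take $A=A_0\vlbin{\gamma}A_1$. I would split on whether $\gamma=\beta$ or not. If $\gamma\neq\beta$, then because $\beta$ is weak we have $\gamma\prec\beta$, so the contractive down-rule applies to $(A_0\vlbin{\gamma}A_1)\vlbin{\beta}(A_0\vlbin{\gamma}A_1)$ and rewrites it to $(A_0\vlbin{\beta}A_0)\vlbin{\gamma}(A_1\vlbin{\beta}A_1)$; two appeals to the inductive hypothesis then contract each factor and yield $A_0\vlbin{\gamma}A_1$. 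If instead $\gamma=\beta$, then associativity~\eqref{align:generic idempotence} and commutativity~\eqref{align:generic idempotence 0} rearrange $(A_0\vlbin{\beta}A_1)\vlbin{\beta}(A_0\vlbin{\beta}A_1)$ into $(A_0\vlbin{\beta}A_0)\vlbin{\beta}(A_1\vlbin{\beta}A_1)$, after which the inductive hypothesis applied to $A_0$ and to $A_1$ again delivers $A_0\vlbin{\beta}A_1$.

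Concretely, the induction step in the case $\gamma\neq\beta$ corresponds to the derivation
\[
\vlderivation{
 \vliq{}{}{A_0\vlbin{\gamma}A_1}
 {\vlin{}{\textrm{\scriptsize inductive hypothesis}}{(A_0\vlbin{\beta}A_0)\vlbin{\gamma}(A_1\vlbin{\beta}A_1)}
 {\vlin{}{(\gamma\prec\beta)}{(A_0\vlbin{\beta}A_0)\vlbin{\gamma}(A_1\vlbin{\beta}A_1)}
 {\vlhy{(A_0\vlbin{\gamma}A_1)\vlbin{\beta}(A_0\vlbin{\gamma}A_1)}}}}}
\]
which is structurally identical to the step used for excluded middle, with the contractive down-rule playing the role played there by the splitting down-rule.

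The routine part is the case analysis and the assembly of the derivation from the contractive rule plus the idempotence axioms; the one subtle point, and the place where the proof leans on the hypotheses, is the appeal to $\gamma\prec\beta$ in the non-equational inductive step. This is exactly why we need $\beta$ to be weak (a maximal element of $\prec$): together with the implicit assumption that every $\gamma\in\mathcal{R}$ occurring in $A$ is comparable with $\beta$, weakness of $\beta$ forces $\gamma\prec\beta$ whenever $\gamma\neq\beta$, making the side condition of the contractive down-rule available. If one did not have such comparability, the inductive step for a $\gamma$ incomparable with $\beta$ would fail, and this is the step I would expect to have to guard most carefully when writing the proof out in full.
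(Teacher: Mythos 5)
Your proof is correct and follows essentially the same route as the paper's: the same base cases via axioms~\eqref{align:generic idempotence 00} and~\eqref{align:generic idempotence 01}, the contractive down-rule under the side condition $\gamma\prec\beta$ for the mixed inductive case, and associativity~\eqref{align:generic idempotence} plus commutativity~\eqref{align:generic idempotence 0} when $\gamma=\beta$. The comparability caveat you flag at the end is a fair observation, but it is equally implicit in the paper's own proof, which simply takes the inductive case to be ``$A_0\vlbin{\gamma}A_1$ for any $\gamma\prec\beta$'' without discussing relations incomparable with $\beta$.
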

\begin{proof}
    The proof is by induction on the structure of $ A $. The base cases $ A = x 
    $ and $ A = v $ holds because~\eqref{align:generic idempotence 00} 
    and ~\eqref{align:generic idempotence 01} hold in the given $ \mathsf{S} $.
    Let $ A $ be $ A_0\vlbin{\gamma} A_1 $, for any $ \gamma\prec\beta $. Then 
    the following derivation\
    $ \vlderivation{
      \vlin{}{\textrm{\scriptsize inductive hypothesys}}
      {A_0\vlbin{\gamma}A_1}
      {\vlin{}{(\gamma\prec\beta)}
      {(A_0\vlbin{\beta}A_0)\vlbin{\gamma}(A_1\vlbin{\beta}A_1)}
      {\vlhy{(A_0\vlbin{\gamma}A_1)\vlbin{\beta}(A_0\vlbin{\gamma}A_1)}}}
    } $ exists.
    Finally, let $ A $ be $ A_0\vlbin{\beta} A_1 $. Then\
    $ 
      {\vlderivation{
      \vlin{}{\textrm{\scriptsize inductive hypothesys}}
      {A_0\vlbin{\beta}A_1}
      {\vliq{\eqref{align:generic idempotence}, \eqref{align:generic 
      idempotence 0}}{}
      {(A_0\vlbin{\beta}A_0)\vlbin{\beta}(A_1\vlbin{\beta}A_1)}
      {\vlhy{(A_0\vlbin{\beta}A_1)\vlbin{\beta}(A_0\vlbin{\beta}A_1)}}}}
    } $ exists.
\end{proof}
Propositions~\ref{proposition:subatominc systems 1.1 excluded middle} 
and~\ref{proposition:subatominc systems 1.1 idempotence} say 
that the medial shape is an invariant of two inference mechanisms.
One is ``\emph{Splitting}'' or, dually, ``\emph{annihilation}''.
It distributes negation. So, the proofs of a Subatomic system-1.1 can 
start from units which split into a pair of structures that annihilate 
each other. The other is 
``\emph{Contraction}'' or, dually, ``\emph{sharing}''. It distributes 
sub-formulas with the goal of identifying two occurrences of the same 
formula into a single one. This is a consequence of a step-wise  
deductive process that reduces the global identification to the 
identification on constants or variables only.

\begin{fact}[Equation derivations]
    \label{fact:Equation derivations}
    Let $ \mathcal{D} $ be a derivation that only contains equation rules
    of a given Subatomic system-1.1 $ \mathsf{S} $. We can obtain derivations 
    of $ \mathsf{S} $
    from $ \mathcal{D} $ in two steps: (i) negating every of formula of $ 
    \mathcal{D} $, 
    (ii) flipping $ \mathcal{D} $ up-side down.
    \qedh
\end{fact}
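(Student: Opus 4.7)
The plan is to proceed by induction on the length $k$ of $\mathcal{D}$, viewed as a sequence of formulas $A_0, A_1, \ldots, A_k$ connected by equational rule instances. When $k = 0$ the derivation is a single formula $A_0$; both steps act trivially on it, leaving a valid one-formula derivation $\vlne{A_0}$.

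For the inductive step, the crux is a context-homomorphism lemma for negation: by a routine structural induction on $S\!\Set{\,}$ using the recursive definition of $\vlne{(\ \ )}$ on binary constructions, one obtains $\vlne{S\!\Set{X}} = \vlne{S}\!\Set{\vlne{X}}$, where $\vlne{S}\!\Set{\,}$ is the context built by replacing every relation symbol of $S\!\Set{\,}$ with its dual. Now consider any single step of $\mathcal{D}$ performed in some context $S\!\Set{\,}$: if it is a down-equation for an axiom $E_i = F_i$ relating $S\!\Set{E_i}$ and $S\!\Set{F_i}$, then applying (i) gives the step $\vlne{S}\!\Set{\vlne{E_i}}$ to $\vlne{S}\!\Set{\vlne{F_i}}$, which is precisely an up-equation instance for axiom $i$; dually, since $\vlne{(\ \ )}$ is involutive, negating an up-equation step produces a down-equation step for the same axiom.

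Step (ii) then reverses the orientation of every transformed step. Because the equational schemes appeal to the congruence $=$, which is symmetric by construction, each equational instance can be applied in either direction, and the reversal remains an admissible rule of $\mathsf{S}$. Chaining the transformed steps in the flipped order and invoking the induction hypothesis on the shorter prefix $\mathcal{D}'$ yields the desired derivation. The main---and only---obstacle is bookkeeping: one has to verify the context-homomorphism lemma and be comfortable with the bidirectional reading of equational rules, both of which are immediate from the definitions.
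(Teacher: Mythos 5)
Your argument is correct and is exactly the reasoning the paper leaves implicit by stating this as a \emph{Fact} with no written proof: negation commutes with contexts (turning each down-equational step into the corresponding up-equational instance, and conversely by involutivity), and the flip is licensed because the double-line equational steps are applications of the symmetric congruence $=$ and so are bidirectional. Your write-up just makes the induction and the context-homomorphism lemma explicit, which is fine.
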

\vllinearfalse
\section{The Subatomic system-1.1 $ \mathsf{P}$}
\label{section:The system P}
We introduce the instance $ \mathsf{P} $ of Subatomic systems-1.1 which 
we could not see how to obtain as an instance of Subatomic systems-1.0 
\cite{Tubella2016,Tubella:2018:SPS:3176362.3173544}.

\begin{definition}[Formulas of $ \mathsf{P} $]
\label{definition: formulas of P}
Let $ \mathcal{F}_{\mathsf{P}} $ be the language of formulas generated by:
\begin{align*}
&& 
A, B & ::= \bTT \mid \bFF \mid \mathcal{V}_{\mathsf{P}} \mid 
\vlne{\mathcal{V}}_{\mathsf{P}} \mid 
A \vlan B \mid A \vlbin{\pi_0} B \mid A \vlbin{\pi_1} B \mid A \vlor B
\enspace .
\end{align*}
The set $ \mathcal{V}_{\mathsf{P}} $ contains the variables $ x, y, z, \ldots $
and $ \vlne{\mathcal{V}}_{\mathsf{P}} $ their negations.
Both $ \pi_0 $ and $ \pi_1 $ stand for the \emph{self-dual} projections on 
first or second argument, respectively.
\qedh
\end{definition}

\begin{definition}[Order relation among the relations of $ \mathsf{P} $]
\label{definition: Order relation among the relations of P}
The operator $ \vlan $ is \emph{strong}, $ \vlor $ is \emph{weak} and every $ 
\pi_i $ is in between. i.e. 
$ A \vlan B \prec_{\mathsf{P}} A \vlbin{\pi_0} B, A \vlbin{\pi_1} B 
\prec_{\mathsf{P}} A \vlor B$.
\qedh
\end{definition}
\noindent
The order relation of Definition~\ref{definition: Order relation among the 
relations of P} originates from the following lattice which
pointwise sorts the boolean functions it contains under the assumption that 
$ \bFF $ is smaller than $ \bTT $:
\begin{center}
\begin{tikzpicture}[scale=.6]
\begin{pgfonlayer}{nodelayer}

\node [style=none,scale=0.7] (pi1)    at (-2.75, 0.0) {
    \vlstore{A \vlbin{\pi_0} B}
    \begin{tabular}{cc|c|c|c|l}
    & & \multicolumn{2}{ c| }{$ B $}        \\ 
    \multicolumn{2}{c|}{$ \vlread $} & $\bFF$ & $\bTT$  \\ 
    \hline
    \multicolumn{1}{ c  }{\multirow{2}{*}{$ A $} } 
    & \multicolumn{1}{ c| }{$\bFF$} & $\bFF$ & $\bFF$  \\ \cline{2-4}
    \multicolumn{1}{ c  }{} 
    & \multicolumn{1}{ c| }{$\bTT$} & $\bTT$ & $\bTT$   \\ \cline{2-4}
    \end{tabular}
};

\node [style=none,scale=0.7] (or)     at ( 0.0 ,-3.0) {
    \vlstore{A \vlbin{\vlan} B}
    \begin{tabular}{cc|c|c|c|l}
    & & \multicolumn{2}{ c| }{$ B $}        \\ 
    \multicolumn{2}{c|}{$ \vlread $} & $\bFF$ & $\bTT$  \\ 
    \hline
    \multicolumn{1}{ c  }{\multirow{2}{*}{$ A $} } 
    & \multicolumn{1}{ c| }{$\bFF$} & $\bFF$ & $\bFF$  \\ \cline{2-4}
    \multicolumn{1}{ c  }{} 
    & \multicolumn{1}{ c| }{$\bTT$} & $\bFF$ & $\bTT$   \\ \cline{2-4}
    \end{tabular}
};

\node [style=none,scale=0.7] (and)    at ( 0.0 , 3.0) {
    \vlstore{A \vlbin{\vlor} B}
    \begin{tabular}{cc|c|c|c|l}
    & & \multicolumn{2}{ c| }{$ B $}        \\ 
    \multicolumn{2}{c|}{$ \vlread $} & $\bFF$ & $\bTT$  \\ 
    \hline
    \multicolumn{1}{ c  }{\multirow{2}{*}{$ A $} } 
    & \multicolumn{1}{ c| }{$\bFF$} & $\bFF$ & $\bTT$  \\ \cline{2-4}
    \multicolumn{1}{ c  }{} 
    & \multicolumn{1}{ c| }{$\bTT$} & $\bTT$ & $\bTT 
    $   \\ \cline{2-4}
    \end{tabular}
};

\node [style=none,scale=0.7] (pi0)    at ( 2.75, 0.0) {
    \vlstore{A \vlbin{\pi_1} B}
    \begin{tabular}{cc|c|c|c|l}
    & & \multicolumn{2}{ c| }{$  B $}        \\ 
    \multicolumn{2}{c|}{$ \vlread $} & $\bFF$ & $\bTT$  \\ 
    \hline
    \multicolumn{1}{ c  }{\multirow{2}{*}{$ A $} } 
    & \multicolumn{1}{ c| }{$\bFF$} & $\bFF$ & $\bTT$  \\ \cline{2-4}
    \multicolumn{1}{ c  }{} 
    & \multicolumn{1}{ c| }{$\bTT$} & $\bFF$ & $\bTT$   \\ \cline{2-4}
    \end{tabular}
};
\end{pgfonlayer}
\begin{pgfonlayer}{edgelayer}
\draw [style=arrow, color=black]  (or)  to (pi1) ;
\draw [style=arrow, color=black]  (or)  to (pi0) ;
\draw [style=arrow, color=black]  (pi1) to (and) ;
\draw [style=arrow, color=black]  (pi0) to (and);
\end{pgfonlayer}
\end{tikzpicture}
\end{center}

\begin{definition}[Negation among formulas of $ \mathsf{P} $]
\label{definition: negation among formulas of P}
For every $ x, A, B\in\mathcal{F}_{\mathsf{P}}$,
let $ \vlne{(\ )} $ be the following involutive and $ \prec_{\mathsf{P}} $-consistent negation:
\begin{align} 
\label{align:axiom negation of constants}
&&\vlne{\bTT}  &= \bFF \\
&&\nonumber
\vlne{\bFF}  &= \bTT \\
&&\nonumber
\vlne{\vlne{x}}  &= x && (\forall x\in\mathcal{V})\\
&&\label{align:axiom de morgan or and}
\vlne{A \vlor B} &= \vlne{A} \vlan \vlne{B} && (\forall A, B\in\mathcal{F}) \\
&&\nonumber
\vlne{A \vlan B} &= \vlne{A} \vlor \vlne{B} && (\forall A, B\in\mathcal{F})\\
&&\label{align:axiom de morgan pi}
\vlne{A \vlbin{\pi_i} B} &= \vlne{A} \vlbin{\pi_i} \vlne{B} 
&& (\forall A, B\in\mathcal{F} \textrm{ and } i\in\Set{0, 1}) 
\enspace .
\end{align} 
\qedh
\end{definition}
\noindent
Axiom~\eqref{align:axiom de morgan pi} sets $ \pi_0 $ and $ \pi_1 $ to be
self-dual operators like the boolean functions they represent.

\begin{definition}[Congruence on formulas of $ \mathsf{P} $]
\label{definition: Congruence among formulas of P}
    Let $\bFF$ be the unit $ u_{\vlor} $ of $ \vlor $
    and $\bTT$ the unit $ u_{\vlan} $ of $ \vlan $.
    For every $ A, B $ and $ C $ in $ \mathcal{F}_{\mathsf{P}} $,
    let $ =_{\mathsf{P}} $ be the congruence that the following axioms induce:
    \begin{align} 
    \label{align:axiom Associativity}
    (A \vlbin{\alpha} B) \vlbin{\alpha} C  &= A \vlbin{\alpha} (B \vlbin{\alpha} C)
    && (\forall A, B, C \in\mathcal{F} 
        \textrm{ and }\alpha\in\Set{\pi_0, \pi_1, \vlor, \vlan}) \\
    \label{align:axiom Commutativity}
    A \vlbin{\alpha} B &= B \vlbin{\alpha} A
    && (\forall A, B \in\mathcal{F} 
       \textrm{ and }\alpha\in\Set{\vlor, \vlan}) \\
    \label{align:axiom unit of or}
    A    \vlor \bFF &= A && (\forall A \in\mathcal{F})\\
    \label{align:axiom right weakening pi0}
    A \vlbin{\pi_0} B &= A && (\forall A, B \in\mathcal{F})\\
    \label{align:axiom right weakening pi1}
    A \vlbin{\pi_1} B &= B && (\forall A, B \in\mathcal{F}) \\
    u    \vlan \bFF &= \bFF && (u\in\{\bFF, \bTT \}) \\
    \label{align:axiom constant excluded middle p}
    u \vlor \vlne{u} &= \bTT && (u\in\{\bFF, \bTT \}) \\
    \label{align:axiom excluded middle p}
    x \vlor \vlne{x} &= \bTT && (\forall x\in\mathcal{V}_{\mathsf{P}}) \\
    \label{align:axiom  atomic constant idempotence P}
    u \vlor u &= u && (u\in\{\bFF, \bTT \}) \\
    \label{align:axiom  atomic idempotence P}
    x \vlor x &= x && (\forall x\in\mathcal{V}_{\mathsf{P}})
    \enspace .
    \end{align}
\qedh
\end{definition}
\noindent
Definition~\ref{definition: Congruence among formulas of P} gives the least set 
of axioms. The missing ones can be derived by negation.

\begin{definition}[System $ \mathsf{P} $]
\label{definition: Subatomic system 1.1 P}
$ \mathsf{P}$ contains the rules:
\begin{center}
  \newcommand{\localF}     {\pi_j}
  \newcommand{\localGlb}   {\vlor}
  \newcommand{\localLub}   {\vlan}
  \newcommand{\localnotF}  {\pi_j}
  \newcommand{\localnotGlb}{\vlan}
  \newcommand{\localnotLub}{\vlor}
  \resizebox{.85\textwidth}{!}{
	\begin{tabular}{cccc}
        $\vlinf{ai_j\downarrow}{j\in\Set{0,1}} 
		{(A \vlbin{\localF} C)\vlbin{\localGlb}(B \vlbin{\localnotF} D)}
		{(A \vlbin{\localGlb} B)\vlbin{\localF}(C \vlbin{\localGlb} D)}$
		&\qquad\qquad\qquad&
		$\vlinf{ai_j\uparrow}{j\in\Set{0,1}} 
		{(A \vlbin{\localnotGlb} B)\vlbin{\localnotF}(C \vlbin{\localnotGlb} D)}
		{(A \vlbin{\localnotF} C)\vlbin{\localnotGlb}(B \vlbin{\localF} D)}$
		\end{tabular}
	} 
\end{center}
\begin{center}
    \newcommand{\localF}     {\pi_j}
    \newcommand{\localGlb}   {\vlor}
    \newcommand{\localLub}   {\vlan}
    \newcommand{\localnotF}  {\pi_j}
    \newcommand{\localnotGlb}{\vlan}
    \newcommand{\localnotLub}{\vlor}
    \resizebox{.85\textwidth}{!}{
        \begin{tabular}{cccc}
            $\vlinf{s\downarrow}{\phantom{i\in\Set{0,1}}}
            {(A \vlbin{\localLub} C)\vlbin{\localGlb}(B \vlbin{\localnotLub} D)}
            {(A \vlbin{\localGlb} B)\vlbin{\localLub}(C \vlbin{\localGlb}    D)}$
            &\qquad\qquad\qquad&
            $\vlinf{s\uparrow}{\phantom{i\in\Set{0,1}}}
            {(A \vlbin{\localnotGlb} B)\vlbin{\localnotLub}(C \vlbin{\localnotGlb} D)}
            {(A \vlbin{\localnotLub} C)\vlbin{\localnotGlb}(B \vlbin{\localLub}    D)}$
            \\\\
            $\vlinf{m\downarrow}{\phantom{i\in\Set{0,1}}}
            {(A \vlbin{\localGlb} C)\vlbin{\localLub}(B \vlbin{\localGlb} D)} 
            {(A \vlbin{\localLub} B)\vlbin{\localGlb}(C \vlbin{\localLub} D)}$
            &\qquad\qquad\qquad&
            $\vlinf{m\uparrow}{\phantom{i\in\Set{0,1}}}
            {(A \vlbin{\localnotLub} B)\vlbin{\localnotGlb}(C \vlbin{\localnotLub} D)} 
            {(A \vlbin{\localnotGlb} C)\vlbin{\localnotLub}(B \vlbin{\localnotGlb} D)}$
            \\\\
            $\vlinf{c_j\downarrow}{j\in\Set{0,1}}
            {(A \vlbin{\localGlb} C)\vlbin{\localF  }(B \vlbin{\localGlb} D)} 
            {(A \vlbin{\localF  } B)\vlbin{\localGlb}(C \vlbin{\localF  } D)}$
            &\qquad\qquad\qquad&
            $\vlinf{c_j\uparrow}{j\in\Set{0,1}}
            {(A \vlbin{\localnotF  } B)\vlbin{\localnotGlb}(C \vlbin{\localnotF  } D)} 
            {(A \vlbin{\localnotGlb} C)\vlbin{\localnotF  }(B \vlbin{\localnotGlb} D)}$
            &
        \end{tabular}
    } 
\end{center}
with formulas of $ \mathcal{F}_{\mathsf{P}} $ 
(Definition~\ref{definition: formulas of P})
taken up to both $ =_{\mathsf{P}} $ (Definition~\ref{definition: Congruence 
among formulas of P}) and the negation in Definition~\ref{definition: negation 
among formulas of P}, with $ \vlan , \pi_0, \pi_1 $ and $ \vlor $ 
ordered under $ \prec_{\mathsf{P}} $ (Definition~\ref{definition: Order 
relation among the relations of P}.)
\qedh
\end{definition}
\noindent
So, $ \mathsf{P} $ is a Subatomic system-1.1 because its formalization fits in 
the framework of Definition~\ref{definition:A possible generalization of 
Subatomic systems}. Hence, Proposition~\ref{proposition:subatominc systems 1.1 
excluded middle}, 
axioms~\eqref{align:axiom Associativity},
\eqref{align:axiom Commutativity}, 
\eqref{align:axiom  atomic constant idempotence P} 
and~\eqref{align:axiom  atomic idempotence P}, 
and the rules $ ai_0\!\!\downarrow, ai_1\!\!\downarrow, s\!\!\downarrow$ imply:
\begin{corollary}[Excluded middle in $ \mathsf{P} $]
    \label{lemma:Generic excluded middle in P}
    For every $ A\in\mathcal{F}_{\mathsf{P}} $, the rule 
    $ \vlupsmash{\vldownsmash{\vldownsmash{\vlinf{}{}{A\vlor \vlne{A}}{\bTT}}}} 
    $\ \
    is derivable.
\end{corollary}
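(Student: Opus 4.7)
The plan is to apply Proposition~\ref{proposition:subatominc systems 1.1 excluded middle} directly, with the instantiation $\alpha := \vlan$. Then $\vlne{\alpha} = \vlor$ (read off from Definition~\ref{definition: negation among formulas of P}) and the distinguished unit of the proposition is $u_{\alpha} = \bTT$, so the statement that the proposition delivers, namely a derivation of $A \vlbin{\vlne{\alpha}} \vlne{A}$ from $u_{\alpha}$, is exactly the rule $A \vlor \vlne{A}$ over $\bTT$ that we want.

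Concretely, I would first check the strongness hypothesis: by Definition~\ref{definition: Order relation among the relations of P}, $\vlan$ sits at the bottom of $\prec_{\mathsf{P}}$, so no $\beta\in\mathcal{R}$ satisfies $\beta \prec_{\mathsf{P}} \vlan$, hence $\vlan$ is strong. Then I would discharge the three equational hypotheses. Equation~\eqref{align:generic excluded middle 01}, i.e.\ $v \vlor \vlne{v} = \bTT$ for $v\in\Set{\bFF, \bTT}$, is axiom~\eqref{align:axiom constant excluded middle p}; equation~\eqref{align:generic excluded middle 00}, i.e.\ $x \vlor \vlne{x} = \bTT$ for $x\in\mathcal{V}_{\mathsf{P}}$, is axiom~\eqref{align:axiom excluded middle p}; equation~\eqref{align:generic excluded middle 02} requires $\bTT \vlbin{\gamma} \bTT = \bTT$ for every $\gamma \prec_{\mathsf{P}} \vlor$, whose three candidates $\vlan, \pi_0, \pi_1$ are dispatched respectively by the unit axiom for $\vlan$ (recalling $\bTT = u_{\vlan}$), by~\eqref{align:axiom right weakening pi0}, and by~\eqref{align:axiom right weakening pi1}.

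Once these checks are in place, Proposition~\ref{proposition:subatominc systems 1.1 excluded middle} yields the derivation by induction on the structure of $A$; unfolding its inductive step, the Splitting rule used on a binary formula $A_0 \vlbin{\gamma} A_1$ instantiates to $s\downarrow$ when $\gamma = \vlan$ and to $ai_j\downarrow$ when $\gamma = \pi_j$, matching precisely the rules announced in the sentence preceding the corollary. I do not expect a genuine obstacle here: the work is entirely bookkeeping that matches the generic schema ``strong $\alpha$, order $\prec$, unit $u_{\alpha}$'' of Subatomic systems-1.1 to the specific symbols $\vlan, \prec_{\mathsf{P}}, \bTT$ of $\mathsf{P}$, which is straightforward because the equational theory $=_{\mathsf{P}}$ of $\mathsf{P}$ was designed to subsume the axioms required by the proposition.
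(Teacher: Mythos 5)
Your proposal is correct and follows essentially the same route as the paper, which likewise obtains the corollary by instantiating Proposition~\ref{proposition:subatominc systems 1.1 excluded middle} with $\alpha=\vlan$ (strong), $\vlne{\alpha}=\vlor$ and $u_{\alpha}=\bTT$, the splitting rules $ai_0\!\!\downarrow$, $ai_1\!\!\downarrow$, $s\!\!\downarrow$ realizing the inductive step. If anything, your matching of the proposition's hypotheses~\eqref{align:generic excluded middle 01}, \eqref{align:generic excluded middle 02} and~\eqref{align:generic excluded middle 00} to axioms~\eqref{align:axiom constant excluded middle p}, \eqref{align:axiom excluded middle p} and the unit/projection equations is more accurate than the paper's own citation, which instead points at the idempotence axioms~\eqref{align:axiom  atomic constant idempotence P} and~\eqref{align:axiom  atomic idempotence P}.
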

\noindent
Moreover, Proposition~\ref{proposition:subatominc systems 1.1 idempotence}, 
axiom~\eqref{align:axiom unit of or},
\eqref{align:axiom right weakening pi0},
\eqref{align:axiom right weakening pi1},
\eqref{align:axiom constant excluded middle p}
and~\ref{align:axiom excluded middle p},
and the rules 
$m\!\!\downarrow, c_0\!\!\downarrow, c_1\!\!\downarrow$ imply:
\begin{corollary}[Idempotence in $ \mathsf{P} $]
    \label{lemma:Generic idempotence in P}
    For every $ A\in\mathcal{F}_{\mathsf{P}} $, the rule 
    $ \vlupsmash{\vlinf{}{}{A}{A\vlor A}} $\ is derivable.
\end{corollary}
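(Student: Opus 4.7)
The plan is to apply Proposition~\ref{proposition:subatominc systems 1.1 idempotence} with $\beta$ instantiated to $\vlor$. By Definition~\ref{definition: Order relation among the relations of P}, $\vlor$ is the maximal element of $\prec_{\mathsf{P}}$ and is therefore \emph{weak} in the sense of the notation section, so the instantiation is admissible. The first step is to match each equational hypothesis of the proposition against an axiom of $\mathsf{P}$: associativity of $\vlor$ is the $\alpha = \vlor$ instance of~\eqref{align:axiom Associativity}; commutativity of $\vlor$ is the $\alpha = \vlor$ instance of~\eqref{align:axiom Commutativity}; the constant idempotence $u\vlor u = u$ is~\eqref{align:axiom  atomic constant idempotence P}; and the variable idempotence $x\vlor x = x$ is~\eqref{align:axiom  atomic idempotence P}. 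All four hypotheses needed by Proposition~\ref{proposition:subatominc systems 1.1 idempotence} are thereby supplied.

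Next I would check that the contractive down-rule invoked in the inductive step of the generic proof is concretely realised in $\mathsf{P}$. The relations strictly below $\vlor$ in $\prec_{\mathsf{P}}$ are exactly $\vlan,\pi_0,\pi_1$, and the corresponding instances of the Contractive down-scheme with $\beta = \vlor$ are precisely $m\!\!\downarrow$ (for $\gamma = \vlan$), $c_0\!\!\downarrow$ (for $\gamma = \pi_0$) and $c_1\!\!\downarrow$ (for $\gamma = \pi_1$) from Definition~\ref{definition: Subatomic system 1.1 P}. With these rules at hand, the induction from the proof of Proposition~\ref{proposition:subatominc systems 1.1 idempotence} transfers verbatim: the base cases $A \in \{\bTT,\bFF\}$ and $A \in \mathcal{V}_{\mathsf{P}}$ are discharged by the idempotence axioms; the case $A = A_0 \vlbin{\gamma} A_1$ with $\gamma \prec_{\mathsf{P}} \vlor$ rewrites $A \vlor A$ in one step of the matching contractive rule to $(A_0 \vlor A_0) \vlbin{\gamma} (A_1 \vlor A_1)$, to which the inductive hypothesis applies componentwise; and the case $A = A_0 \vlor A_1$ is handled by associativity and commutativity of $\vlor$ followed by the inductive hypothesis.

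The one subtle point I would flag is the base case $A = \vlne{x}$ for $x \in \mathcal{V}_{\mathsf{P}}$: axiom~\eqref{align:axiom  atomic idempotence P} is stated only for positive variables, whereas the variable layer $\mathcal{V}$ of Definition~\ref{definition:A possible generalization of Subatomic systems} contains, in the case of $\mathsf{P}$, both $\mathcal{V}_{\mathsf{P}}$ and $\vlne{\mathcal{V}}_{\mathsf{P}}$. Following the remark after Definition~\ref{definition: Congruence among formulas of P} that the missing equations are to be obtained by negation, I would promote~\eqref{align:axiom  atomic idempotence P} to range over negated literals as well, or equivalently fold $\vlne{x} \vlor \vlne{x} = \vlne{x}$ into the base case as a companion equation. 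Beyond this housekeeping observation no further proof work is required; I do not anticipate a real obstacle, as the corollary is essentially a direct specialisation of Proposition~\ref{proposition:subatominc systems 1.1 idempotence} to the concrete parameters of $\mathsf{P}$.
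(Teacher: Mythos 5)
Your proposal is correct and takes essentially the same route as the paper: the corollary is the instantiation of Proposition~\ref{proposition:subatominc systems 1.1 idempotence} at $\beta=\vlor$, with its hypotheses discharged by associativity, commutativity and the constant and variable idempotence axioms of $=_{\mathsf{P}}$, and with the contractive down-rules realised by $m\!\!\downarrow$, $c_0\!\!\downarrow$ and $c_1\!\!\downarrow$. Your axiom list is in fact the accurate one (the paper's citation list for this corollary appears to have been interchanged with that of the excluded-middle corollary), and your remark about negated literals agrees with the paper's stated convention that the missing equations of Definition~\ref{definition: Congruence among formulas of P} are obtained by negation.
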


\begin{remark}
As far as we can see, $ \mathsf{P} $ cannot be a Subatomic 
system-1.0, in accordance with Definition~2.5 in \cite[page 10]{Tubella2016} 
and \cite[page 6]{Tubella:2018:SPS:3176362.3173544}.
The axiom scheme (3) of those two works classifies every \emph{unitary} 
relation $ \alpha $ as one for which we have:
\begin{align}
\label{align:unit of subatomic systems}
&& A \vlbin{\alpha} u & = A = u \vlbin{\alpha} A 
\enspace .
\end{align}
However, the natural behavior of the relations $ \pi_0 $ and $ \pi_1 $ of $ 
\mathsf{P} $ 
is given by~\eqref{align:axiom right weakening pi0} 
and~\eqref{align:axiom right weakening pi1}, instances 
of~\eqref{align:right weakening ax} and~\eqref{align:left weakening ax}.
So, they cannot satisfy~\eqref{align:unit of subatomic systems}.
We will see that the weaker behavior of $ \pi_0 $ and $ \pi_1 $ as compared to 
the one of a unitary relation requires to generalize the Splitting theorem 
(Section~\ref{section:Splitting in subatomic systems-1.1}.)
\qedh
\end{remark}

\begin{remark}
There is an aspect of $ \mathsf{P} $ for which we have no convincing a priori 
justification. For every $ j $, the rule $ c_j\!\!\downarrow $ is $ 
ai_j\!\!\downarrow $  flipped up side down.  Currently, we limit to observe 
that this is harmless. Both rules are semantically sound, i.e. the truth of the 
premise implies the one of the conclusion.
\qedh
\end{remark}

\section{Cut-elimination in Subatomic systems-1.1}
\label{section:Splitting in subatomic systems-1.1}
We here adapt the proof of the cut-elimination for Subatomic systems-1.0 
\cite{Tubella2016,Tubella:2018:SPS:3176362.3173544} to version 1.1.

\begin{definition}[Splittable down-fragment]
    \label{definition:splittable subatomic systems-1.1}
    Let $ \mathsf{S}$ be a Subatomic system-1.1.
    Then, $ \mathsf{S}\!\!\downarrow $ is the \emph{Splittable down-fragment}
    of $ \mathsf{S} $ if:
    \begin{enumerate}
        \item $ \mathsf{S}\!\!\downarrow $ contains at least one weak relation;
        \item For every weak relation $ \beta $ in $ \mathsf{S}\!\!\downarrow $ 
        with unit $ u_{\beta}\in\mathcal{U} $
        the following axioms hold:
        \begin{align}
        \label{align:ax6 gen SA}
        \vlne{u_\beta}\vlbin{\alpha}\vlne{u_\beta}& = 
        \vlne{u_{\beta}}                    &&\forall \alpha\in\mathcal{R}. 
        \alpha\prec\beta\\
        \label{align:ax1 gen SA}
        (A \vlbin{\beta} B) \vlbin{\beta} C & = A \vlbin{\beta} (B \vlbin{\beta} C) &&\forall A, B, C\in\mathcal{F}\\
        \label{align:ax2 gen SA}
        A \vlbin{\beta} B                  & = B \vlbin{\beta} 
        A                   &&\forall A, B   \in\mathcal{F}\\
        \label{align:ax4 gen SA}
        u\vlbin{\beta}\vlne{u}              & = 
        \vlne{u_{\beta}}                    &&\forall u\in\mathcal{U}\\
        \label{align:ax3 gen SA}
        A          \vlbin{\beta} u_{\beta}  & = A                                   &&\forall A\in\mathcal{F}\\
        \label{align:ax5 gen SA}
        x\vlbin{\beta}\vlne{x}              & = 
        \vlne{u_{\beta}}                    &&\forall x\in\mathcal{V}
        \enspace ;
        \end{align}
        
        \item 
        $ \mathsf{S}\!\!\downarrow $ contains all and only the splitting and 
        equational down-rules, as in
        Definition~\ref{definition:A possible generalization of Subatomic 
        systems}. So, it does not contain any contractive down-rule.
    \qedh
    \end{enumerate}
\end{definition}

\noindent
Like in \cite{Tubella2016,Tubella:2018:SPS:3176362.3173544}, 
axioms~\eqref{align:ax6 gen SA}, 
\eqref{align:ax1 gen SA} 
and~\eqref{align:ax2 gen SA} are strongly linked to the way that splitting
works. Once decomposed a proof into independent subproofs, they can be composed 
back into a new proof exactly because the here above axioms hold.
Also~\eqref{align:ax4 gen SA} is in 
\cite{Tubella2016,Tubella:2018:SPS:3176362.3173544}.
Instead, both~\eqref{align:ax3 gen SA} and~\eqref{align:ax5 gen SA} are new.

\noindent
Symmetrically to Definition~\ref{definition:splittable subatomic systems-1.1}, 
we can identify the splittable down-fragment.
\begin{definition}[Splittable up-fragment]
    \label{definition:Up-fragment of S}
    Let $ \mathsf{S} $ be a Subatomic system-1.1 with a Splittable down-fragment
    $ \mathsf{S}\!\!\downarrow $ as in Definition~\ref{definition:splittable 
        subatomic systems-1.1}.
    The Splittable up-fragment $ \mathsf{S}\!\!\uparrow $ 
    contains all and only the splitting and equational \emph{up}-rules of 
    $ \mathsf{S} $ that correspond to the rules in $ \mathsf{S}\!\!\downarrow $.
    \qedh
\end{definition}

\begin{definition}[Length of a derivation]
    \label{definition:Length of a derivation}
    Let $ \mathsf{S}\!\!\downarrow $ be a \emph{Splittable} Subatomic system-1.1.
    The \emph{length} $ |\mathcal{D}| $ of a derivation $ \mathcal{D} $ in $ 
    \mathsf{S}\!\!\downarrow $
    counts the number of rules of $ \mathcal{D} $ which do not correspond to 
    the application of any axiom among~\eqref{align:ax1 gen SA},
    \eqref{align:ax2 gen SA}, \eqref{align:ax3 gen SA}, \eqref{align:ax4 gen 
    SA} and~\eqref{align:ax5 gen SA}
    at point 2 of Definition~\ref{definition:splittable subatomic systems-1.1}.
    I.e. axioms that involve \emph{weak relations} do not contribute to the 
    growth of the dimension of a derivation, \emph{while~\eqref{align:ax6 gen 
    SA} does.}
    \qedh
\end{definition}

\begin{lemma}[Atomic deduction for Splittable Subatomic systems-1.1]
    \label{lemma:Atomic Deduction for generalized subatomic systems}
    Let $ \beta $ be a weak relation with unit $ u_{\beta} $ of a Splittable 
    Subatomic system-1.1
    $ \mathsf{S}\!\!\downarrow$. For every $ u\in\mathcal{U}$ and 
    $ B\in\mathcal{F}$,
    if 
    $ \vldownsmash
      {\vlder{\mathcal{P}}{ \mathsf{S}\downarrow}{u \vlbin{\beta} B}
      {\vlne{u_{\beta}}}} $, 
    then 
    $\vldownsmash
      { \vlder{\mathcal{D}}
      { \mathsf{S}\downarrow}{B}{\vlne{u}}} $ exists.
    Analogously, for every $ x\in\mathcal{V}$,
    if 
    $\vldownsmash
      { \vlder{\mathcal{P}'}
      { \mathsf{S}\downarrow}{x \vlbin{\beta} B}{\vlne{u_{\beta}}}} $, 
    then 
    $\vldownsmash
     { \vlder{\mathcal{D}'}
     { \mathsf{S}\downarrow}{B}{\vlne{x}} }$ exists.
\end{lemma}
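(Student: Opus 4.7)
The plan is induction on the length $|\mathcal{P}|$ of the derivation in $\mathsf{S}\!\!\downarrow$. The two halves of the statement are symmetric once one swaps axiom~\eqref{align:ax4 gen SA} for~\eqref{align:ax5 gen SA} and treats $u$ or $x$ uniformly as an atomic formula, so I focus on the constant case and note at the end that the variable case is identical.

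For the base case $|\mathcal{P}|=0$, the conclusion $u\vlbin{\beta}B$ equals the premise $\vlne{u_\beta}$ modulo the axioms listed at points~\eqref{align:ax1 gen SA}--\eqref{align:ax5 gen SA} of Definition~\ref{definition:splittable subatomic systems-1.1}, which by Definition~\ref{definition:Length of a derivation} are cost-free. Rewriting $\vlne{u_\beta}=u\vlbin{\beta}\vlne{u}$ via~\eqref{align:ax4 gen SA} and appealing to the associativity, commutativity and unit-absorption of $\beta$ provided by~\eqref{align:ax1 gen SA}, \eqref{align:ax2 gen SA} and~\eqref{align:ax3 gen SA}, one expects $B$ to reduce to $\vlne{u}$ up to the same cost-free axioms, so that the empty derivation already witnesses the desired $\vldownsmash{\vlder{}{\mathsf{S}\downarrow}{B}{\vlne{u}}}$.

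For the inductive step, I would inspect the bottom rule $r$ of $\mathcal{P}$. If $r$ is an equational rule whose axiom is \emph{not} among~\eqref{align:ax1 gen SA}--\eqref{align:ax5 gen SA}, Fact~\ref{fact:Equation derivations} lets me prepend its inverse to the derivation I am building and reduce to the induction hypothesis on the shorter prefix of $\mathcal{P}$. If $r$ is a splitting down-rule whose redex sits strictly inside the $B$-side of the outermost $\beta$ separating $u$ from $B$, I transport $r$ unchanged into the target derivation and invoke the induction hypothesis on $\mathcal{P}$ minus $r$.

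The main obstacle is the case in which $r$'s redex straddles the outermost $\beta$ and might appear to consume $u$. Since every splitting scheme has the shape $(X\vlbin{\gamma}Y)\vlbin{\alpha}(Z\vlbin{\gamma}W)$ with $\alpha\prec\gamma$, and since $u$ is atomic, $u$ cannot instantiate an entire $X,Y,Z$ or $W$ in a non-trivial way, so it must appear as a proper subformula of one of them. Using the cost-free associativity and commutativity of $\beta$ together with the absorption~\eqref{align:ax3 gen SA}, and possibly refactoring via~\eqref{align:ax4 gen SA}, one should be able to re-expose $u$ at the top level, show that $r$ in fact only rewrites within the $B$-side of $\beta$, and then close by the induction hypothesis applied to the resulting shorter derivation. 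This bookkeeping, ensuring that $u$ never genuinely participates in a splitting redex and that the length strictly decreases after the rearrangement, is where I expect the real effort to lie; the variable subcase is then handled identically, with~\eqref{align:ax5 gen SA} replacing~\eqref{align:ax4 gen SA}.
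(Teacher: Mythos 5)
Your proposal does not follow the paper's argument, and as written it has real gaps. The base case claim --- that $|\mathcal{P}|=0$ forces $B$ to coincide with $\vlne{u}$ up to the cost-free axioms, so that the empty derivation suffices --- is asserted, not proved: the congruence $=$ is generated by a fairly rich set of axioms and nothing rules out that $u\vlbin{\beta}B=\vlne{u_\beta}$ holds for some $B$ not congruent to $\vlne{u}$ (the lemma only asks for a \emph{derivation} from $\vlne{u}$ to $B$, and that is all one should expect). More seriously, the inductive step's crucial case --- a splitting redex that straddles the outermost $\beta$ and touches $u$ --- is exactly where you stop and say ``this is where I expect the real effort to lie.'' That case is the whole content of the lemma under your strategy, and carrying it out amounts to a rule-permutation argument of the same flavour as Shallow Splitting itself (Theorem~\ref{theorem:Shallow Splitting for generalized subatomic systems}), which in the paper is proved \emph{after}, and using, this lemma; so the route you sketch is both incomplete and at risk of rebuilding the harder theorem inside the easier lemma.

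The paper's proof needs no induction at all. It is a direct construction exploiting the fact that, in deep inference, an entire derivation can be run inside a context. Starting from $\vlne{u}$, one uses the unit axiom~\eqref{align:ax3 gen SA} twice to reach $(\vlne{u}\vlbin{\beta}u_{\beta})\vlbin{\vlne{\beta}}\vlne{u_{\beta}}$, runs the whole of $\mathcal{P}$ inside the context $(\vlne{u}\vlbin{\beta}u_{\beta})\vlbin{\vlne{\beta}}\{\ \}$ to obtain $(\vlne{u}\vlbin{\beta}u_{\beta})\vlbin{\vlne{\beta}}(u\vlbin{\beta}B)$, applies a single instance of the splitting down-rule (legal because $\beta$ weak and $\prec$-consistency make $\vlne{\beta}\prec\beta$) to get $(\vlne{u}\vlbin{\vlne{\beta}}u)\vlbin{\beta}(u_{\beta}\vlbin{\beta}B)$, and finally collapses $\vlne{u}\vlbin{\vlne{\beta}}u$ to $u_{\beta}$ by the negation of~\eqref{align:ax4 gen SA} and absorbs the units by~\eqref{align:ax2 gen SA} and~\eqref{align:ax3 gen SA}, landing on $B$. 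The variable half is word-for-word the same with~\eqref{align:ax5 gen SA} in place of~\eqref{align:ax4 gen SA}, as you anticipated. The idea you are missing is precisely this ``wrap $\mathcal{P}$ in a context and split once'' move; with it, all the bookkeeping you were worried about disappears.
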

\begin{proof}
    The derivation $ \mathcal{D} $ is\
    $ 
      {\vlderivation{
       \vliq{\eqref{align:ax2 gen SA},\eqref{align:ax3 gen SA}}{}
       {B}
       {\vliq{\eqref{align:ax4 gen SA}}{}
       {u_{\beta}\vlbin{\beta}(u_{\beta}\vlbin{\beta} B)}
       {\vlin{s\downarrow}{}
       {(\vlne{u}\vlbin{\vlne{\beta}} u)\vlbin{\beta}(u_{\beta}\vlbin{\beta} B)}
       {\vlde{\mathcal{P}}{ \mathsf{S}\downarrow}
       {(\vlne{u}\vlbin{\beta}u_{\beta})\vlbin{\vlne{\beta}}(u\vlbin{\beta} B)}
       {\vliq{\eqref{align:ax3 gen SA}}{}
       {(\vlne{u}\vlbin{\beta}u_{\beta})\vlbin{\vlne{\beta}}\vlne{u_{\beta}}}
       {\vliq{\eqref{align:ax3 gen SA}}{}
       {\vlne{u}\vlbin{\beta}u_{\beta}}
       {\vlhy{\vlne{u}}}}}}}}
    }} $\ , 
    while $ \mathcal{D}' $ is\
    $ 
      {\vlderivation{
       \vliq{\eqref{align:ax2 gen SA}, \eqref{align:ax3 gen SA}}{}
       {B}
       {\vliq{\eqref{align:ax5 gen SA}}{}
       {u_{\beta}\vlbin{\beta}(u_{\beta}\vlbin{\beta} B)}
       {\vlin{s\downarrow}{}
       {(\vlne{x}\vlbin{\vlne{\beta}} x)\vlbin{\beta}(u_{\beta}\vlbin{\beta} B)}
       {\vlde{\mathcal{P}'}{ \mathsf{S}\downarrow}
       {(\vlne{x}\vlbin{\beta}u_{\beta})\vlbin{\vlne{\beta}}(x\vlbin{\beta} B)}
       {\vliq{\eqref{align:ax3 gen SA}}{}
       {(\vlne{x}\vlbin{\beta}u_{\beta})\vlbin{\vlne{\beta}}\vlne{u_{\beta}}}
       {\vliq{\eqref{align:ax3 gen SA}}{}
       {\vlne{x}\vlbin{\beta}u_{\beta}}
       {\vlhy{\vlne{x}}}}}}}}
    }} $\ . 
\end{proof}

The following theorem strictly generalizes the namesake one in 
\cite{Tubella:2018:SPS:3176362.3173544}.

\begin{theorem}[Shallow splitting]
    \label{theorem:Shallow Splitting for generalized subatomic systems}
    Let $ \beta $ be a weak relation with unit $ u_{\beta} $ in a Splittable 
    Subatomic system-1.1 
    $ \mathsf{S}\!\!\downarrow $. 
    For every $ \alpha\prec \beta $,
    let $ \vlder{\mathcal{P}}{ \mathsf{S}\downarrow}{(A_0 \vlbin{\alpha} A_1) \vlbin{\beta} B}{\vlne{u_{\beta}}} \,$
    be given.
    \begin{enumerate}
        \item 
        If $ \alpha $ is a right weakening,
        then
        $ 
          {\vlder{\mathcal{D}}{ \mathsf{S}\downarrow}{B}{K_0 
          \vlbin{\vlne\alpha} K_1}} $
        exists such that 
        $ 
           {\vlder{\mathcal{Q}_0}{ \mathsf{S}\downarrow}{A_0 \vlbin{\beta} 
           K_0}{\vlne{u_{\beta}}}}\enspace $ exists as well
        and $ |\mathcal{Q}_0|\leq |\mathcal{P}| $.
        If $ \alpha $ is a left weakening, replace $ 1 $ for $ 0 $.
        \item 
        If $ \alpha $ is unitary, then
        $ \vldownsmash
          {\vlder{\mathcal{D}}
          { \mathsf{S}\downarrow}{B}{K_0 \vlbin{\vlne\alpha} K_1}} $
        exists such that, for every $ i\in\Set{0, 1} $,
        $ \vldownsmash
          {\vlder{\mathcal{Q}_i}
          { \mathsf{S}\downarrow}
          {A_i \vlbin{\beta} K_i}
          {\vlne{u_{\beta}}}}\enspace $ 
          exists as well
        and $ |\mathcal{Q}_0|+|\mathcal{Q}_1|\leq |\mathcal{P}| $.
    \end{enumerate}
\end{theorem}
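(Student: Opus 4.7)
The plan is strong induction on $|\mathcal{P}|$, treating items~1 and~2 in parallel, with case analysis on the bottom-most counted rule $r$ of $\mathcal{P}$. By Definition~\ref{definition:Length of a derivation}, equational steps from axioms~\eqref{align:ax1 gen SA}--\eqref{align:ax5 gen SA} on the weak relation $\beta$ do not contribute to length, so I may freely reassociate, commute, and absorb $u_\beta$ across the outer $\beta$ at no cost. Fact~\ref{fact:Equation derivations} also lets me flip purely equational subderivations when helpful.

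In the base case $|\mathcal{P}| = 0$, the whole of $\mathcal{P}$ is $\beta$-equational, so $(A_0 \vlbin{\alpha} A_1) \vlbin{\beta} B$ equals $\vlne{u_\beta}$ modulo the free axioms alone. I extract the complements $K_0, K_1$ by reading off the annihilation structure of these axioms: Lemma~\ref{lemma:Atomic Deduction for generalized subatomic systems} supplies the atomic case when $A_i$ is a constant or variable (via \eqref{align:ax4 gen SA} and \eqref{align:ax5 gen SA}), while for a compound $A_i$ an inner induction on its syntactic structure, together with $s\!\downarrow$ and the associativity/commutativity of $\beta$, distributes the annihilation to the leaves, yielding $|\mathcal{Q}_i|=0$.

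For the inductive step I look at the position of $r$'s redex relative to the split $(A_0 \vlbin{\alpha} A_1) \vlbin{\beta} B$. If the redex lies strictly inside $B$ or some $A_i$, then $r$ commutes past the outer $\beta$ and the IH applies to the shorter derivation obtained by peeling $r$ off; I then reattach $r$ at the corresponding depth inside the produced $\mathcal{D}$ or $\mathcal{Q}_i$, preserving the length count. If the redex of $r$ is the outer $\beta$ itself, then $r$ is a splitting rule whose conclusion I first normalise via free $\beta$-equations, and then apply the IH to its premise.

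The main obstacle, and the only genuinely new content relative to~\cite{Tubella2016,Tubella:2018:SPS:3176362.3173544}, is item~1, where $\alpha$ is a weakening: by~\eqref{align:right weakening ax} or~\eqref{align:left weakening ax}, one of $A_0, A_1$ is syntactically absorbed, so only a single $\mathcal{Q}_i$ is required; I must nevertheless handle consistently those rules $r$ whose redex involves the $\alpha$-axis on the vanishing side. The weaker axioms~\eqref{align:ax3 gen SA} and~\eqref{align:ax5 gen SA}---which replace the stronger version~1.0 unit axiom but suffice because they hold for the weak $\beta$---do exactly this job, and the length bound $|\mathcal{Q}_0| \le |\mathcal{P}|$ follows because all structural rearrangements across $\beta$ are free.
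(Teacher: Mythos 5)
Your overall strategy (induction on $|\mathcal{P}|$, both items in parallel, case analysis on the rule occurrences, with the weakening axioms~\eqref{align:ax3 gen SA} and~\eqref{align:ax5 gen SA} doing the new work for item~1) is the same as the paper's, but there is a concrete gap at the base case. You take the base case to be $|\mathcal{P}| = 0$ and assert that then $(A_0 \vlbin{\alpha} A_1) \vlbin{\beta} B$ equals $\vlne{u_{\beta}}$ ``modulo the free axioms alone.'' This situation cannot arise: the uncounted axioms \eqref{align:ax1 gen SA}--\eqref{align:ax5 gen SA} only mention the weak relation $\beta$, so no derivation of length $0$ from $\vlne{u_{\beta}}$ can produce a formula containing the non-weak relation $\alpha$. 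The paper makes exactly this observation to conclude that $|\mathcal{P}| \geq 1$ and that some instance of~\eqref{align:ax6 gen SA} --- which, unlike the others, \emph{does} count towards the length by Definition~\ref{definition:Length of a derivation} --- must occur, splitting $\vlne{u_{\beta}}$ into $\vlne{u_{\beta}} \vlbin{\alpha} \vlne{u_{\beta}}$. The genuine base case is therefore $|\mathcal{P}| = 1$ with that single counted occurrence of~\eqref{align:ax6 gen SA}, and it is resolved by decomposing $\mathcal{P}$ into three length-$0$ pieces, applying Lemma~\ref{lemma:Atomic Deduction for generalized subatomic systems} to the piece above the $\beta$, and taking $K_0 = K_1 = u_{\beta}$; your proposal never treats this configuration (your inductive case analysis only covers rules whose redex sits inside $B$, inside some $A_i$, or at the outer $\beta$, and none of these is the equational step acting on the premise $\vlne{u_{\beta}}$ itself). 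Your appeal to $s\!\downarrow$ and an inner structural induction in the ``base case'' would moreover increase the length, which is not available at length $0$.

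The inductive step is also too coarse where it matters most. The delicate cases are those in which an instance of~\eqref{align:right weakening ax} or~\eqref{align:left weakening ax} absorbs a branch adjacent to the $\beta$-context, e.g.\ $\mathcal{P}$ passing through $(((A_0\vlbin{\alpha} A_1)\vlbin{\beta} B_0) \vlbin{\gamma} C)\vlbin{\beta} B_1$ with $\gamma$ a weakening. Handling this requires \emph{two} successive applications of the inductive hypothesis (first to split off the $\gamma$, obtaining $\mathcal{D}'$ and $\mathcal{Q}'$, then to $\mathcal{Q}'$ to split off the $\alpha$) and an explicit composition of $\mathcal{D}'$ and $\mathcal{D}''$ through a weakening equation to build $\mathcal{D}$; there is also the separate case in which $\alpha$ itself is the weakening applied at the outer level, where $\mathcal{D}$ is just the equation $B = B \vlbin{\vlne{\alpha}} K$ and $\mathcal{Q}_0 = \mathcal{P}'$. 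Your one-sentence remark that the weakening axioms ``do exactly this job'' does not identify these configurations or the double use of the inductive hypothesis, so the length bounds $|\mathcal{Q}_0| \leq |\mathcal{P}|$ and $|\mathcal{Q}_0|+|\mathcal{Q}_1| \leq |\mathcal{P}|$ are asserted rather than established.
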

\begin{proof}
We prove both points simultaneously, proceeding by induction on $ |\mathcal{P}| 
$. The value of $ |\mathcal{P}| $ is at least 1 because $ \alpha \prec \beta$
and $ \alpha $ is not weak. Necessarily, an occurrence of~\eqref{align:ax6 gen 
SA} exists in $ \mathcal{P} $ which generates a formula out of $ \vlne{u_\beta} 
$ with $ \alpha $ in it.

\begin{itemize}
    \item 
The base case is with $ |\mathcal{P}| = 1$ and~\eqref{align:ax6 gen SA} occurs 
in $ \mathcal{P} $. So, $ \mathcal{P} $ is composed by the three derivations
$ 
{\vlderivation{
\vliq{}{}
{(\vlne{u_\beta}\vlbin{\alpha}\vlne{u_\beta}) \vlbin{\beta} B'}
{\vlde{\mathcal{P}'}
{\mathsf{S}\downarrow}
{\vlne{u_\beta} \vlbin{\beta} B'}
{\vlhy{\vlne{u_{\beta}}}}}
}}$,\
$ \vlupsmash
{\vlderivation{
\vlde{\mathcal{P}''}{\mathsf{S}\downarrow}
{B}
{\vlhy{B'}}}
}$ and
$ \vlupsmash
{\vlderivation{
\vlde{\mathcal{P}_i}{\mathsf{S}\downarrow}
{A_i}
{\vlhy{\vlne{u_{\beta}}}}
}}$, for every $ i\in\{0, 1\} $,
where $ |\mathcal{P}'| = |\mathcal{P}''|= |\mathcal{P}_0|= |\mathcal{P}_1| = 
0$.
Lemma~\ref{lemma:Atomic Deduction for generalized subatomic systems} holds on 
$ \mathcal{P}'$. So, 
$ \vlupsmash
  { 
  {\vlder{\mathcal{D}'}{\mathsf{S}\downarrow}{B'}{u_\beta}}}$ 
implies the existence of $ \mathcal{D} $ which is
$ \vlupsmash
{\vlderivation{
\vlde{\mathcal{D}'}{\mathsf{S}\downarrow}
{ B'}
{\vliq{}{}
{u_\beta}
{\vlhy{u_{\beta} \vlbin{\vlne{\alpha}} u_{\beta}}}}
}}$.

Two cases are now possible.
\begin{itemize}
\item 
Let $ \alpha  $ be unitary.
For every $ i\in\Set{0, 1} $, the proof
$ \mathcal{Q}_i $ is
$\,\vlupsmash
{
{\vlderivation{
\vlde{\mathcal{P}_i}{\mathsf{S}\downarrow}
{A_i \vlbin{\beta} u_\beta}
{\vliq{}{}
{\vlne{u_\beta} \vlbin{\beta} u_\beta}
{\vlhy{ \vlne{u_{\beta}}}}}}}} \,$.
Moreover, 
$ |\mathcal{Q}_0| + |\mathcal{Q}_1| 
= |\mathcal{P}_0| + |\mathcal{P}_1| 
< |\mathcal{P}| = 1$ because none among 
$ \mathcal{Q}_0, \mathcal{Q}_1, \mathcal{P}_0 $ and 
$ \mathcal{P}_1 $ contain axioms that count $1 $.

   \item 
   If $ \alpha  $ is a right or a left weakening we proceed as here above, but 
   focusing only on one of the two proofs $ \mathcal{Q}_0 $ and 
   $ \mathcal{Q}_1 $.
 \end{itemize}     
 
\item 
 The inductive case has $ |\mathcal{P}| > 1 $.
 We only develop the details of the relevant cases.
 The first relevant case is a refinement of point (3) in the original proof of Shallow splitting 
 of \cite{Tubella2016,Tubella:2018:SPS:3176362.3173544}. The refinement 
 requires to consider the possibilities
 that we introduce a constant by distinguishing among unitary relations, right weakening and left weakening.
 \begin{itemize}
   \item 
   Let $ \alpha $ and $ \gamma $ be right weakening such that $ \mathcal{P} $ is
   $
     {\vlderivation{
       \vliq{\eqref{align:right weakening ax}}{}
       {(A_0\vlbin{\alpha} A_1)\vlbin{\beta} B_0\vlbin{\beta} B_1}
       {\vlde{\mathcal{P}'}{\mathsf{S}\downarrow}
       {(((A_0\vlbin{\alpha} A_1)\vlbin{\beta} B_0) 
           \vlbin{\gamma} C)
           \vlbin{\beta} B_1}
       {\vlhy{\vlne{u_\beta}}}}
    }} \,$.
   Because $|\mathcal{P}'| < |\mathcal{P}|$, by the inductive hypothesis
   $ \vlder{\mathcal{D}'}{\mathsf{S}\downarrow}
   {B_1}
   {K_l \vlbin{\vlne \gamma} K_r} $ 
   exists such that
   $ \vlderivation{
     \vlde
     {\mathcal{Q}'}{\mathsf{S}\downarrow}
     {(A_0\vlbin{\alpha} A_1)\vlbin{\beta} B_0\vlbin{\beta} K_l}
     {\vlhy{\vlne{u_\beta}}}
   }$ 
   exists as well with
   $ |\mathcal{Q}'| \leq |\mathcal{P}'| < |\mathcal{P}|  $.
   So, the inductive hypothesis holds on $\mathcal{Q}'$ and a derivation 
   $ \vlder{\mathcal{D}''}{\mathsf{S}\downarrow}
   {B_0\vlbin{\beta} K_l}
   {K_0 \vlbin{\vlne\alpha} K_1} $ exists such that
   $\vlupsmash
    {\vldownsmash
    {\vlderivation{
     \vlde
     {\mathcal{Q}''}{\mathsf{S}\downarrow}
     {A_0 \vlbin{\beta} K_0}
     {\vlhy{\vlne{u_\beta}}}}}}$ exists as well with 
   $ |\mathcal{Q}''| \leq |\mathcal{Q}'| \leq |\mathcal{P}'| < |\mathcal{P}|$.
   The proof we are looking for is $ \mathcal{Q}'' $.
   The derivation is
   $\,
   {\vlderivation{
    \vlde{\mathcal{D}'}{\mathsf{S}\downarrow}
    {B_0 \vlbin{\beta} B_1}
    {\vliq{\eqref{align:right weakening ax}}{}
    {B_0 \vlbin{\beta} (K_l \vlbin{\vlne\gamma} K_r)}
    {\vlde{\mathcal{D}''}{\mathsf{S}\downarrow}
    {B_0 \vlbin{\beta} K_l}
    {\vlhy{K_0 \vlbin{\vlne\alpha} K_1}}}}}}\, $.

   \item 
   Let $ \alpha $ be unitary and let $ \gamma $ be right weakening such that $ \mathcal{P} $ is
   $ \vlupsmash
     {\vlderivation{
      \vliq{\eqref{align:right weakening ax}}{}
      {(A_0\vlbin{\alpha} A_1)\vlbin{\beta} B_0\vlbin{\beta} B_1}
      {\vlpr{\mathcal{P}'}{\mathsf{S}\downarrow}
      {(((A_0\vlbin{\alpha} A_1)\vlbin{\beta} B_0) \vlbin{\gamma}  
      C)\vlbin{\beta} B_1}}
     }} \,$.
   Because $|\mathcal{P}'| < |\mathcal{P}|$, by the inductive hypothesis
   $ \vlder{\mathcal{D}'}{\mathsf{S}\downarrow}
   {B_1}
   {K_l \vlbin{\vlne \gamma} K_r} $ exists such that
   $ \vlderivation{
    \vlde{\mathcal{Q}'}{\mathsf{S}\downarrow}
    {(A_0\vlbin{\alpha} A_1)\vlbin{\beta} B_0\vlbin{\beta} K_l}
    {\vlhy{\vlne{u_\beta}}}
   } $ 
   exists as well with
   $ |\mathcal{Q}'| \leq |\mathcal{P}'| < |\mathcal{P}|  $.
   So, the inductive hypothesis holds on $\mathcal{Q}'$ and a derivation 
   $ \vlder{\mathcal{D}''}{\mathsf{S}\downarrow}
   {B_0\vlbin{\beta} K_l}
   {K_0 \vlbin{\vlne\alpha} K_1} $ exists such that, for every $ i\in\Set{0, 1} $,
   the proof 
   $\vlderivation{
    \vlde{\mathcal{Q}_i}{\mathsf{S}\downarrow}
    {A_i \vlbin{\beta} K_i}
    {\vlhy{\vlne{u_\beta}}}
   } 
   $ 
   exists as well and
   $ |\mathcal{Q}_0| +|\mathcal{Q}_1| \leq |\mathcal{Q}'| \leq |\mathcal{P}'| < |\mathcal{P}|$.
   The proofs we are looking for are $ \mathcal{Q}_0 $ and $ \mathcal{Q}_1 $.
   The derivation $ \mathcal{D} $ is
   $\, \vlderivation{
   \vlde{\mathcal{D}'}{\mathsf{S}\downarrow}
   {B_0 \vlbin{\beta} B_1}
   {\vliq{\eqref{align:right weakening ax}}{}
   {B_0 \vlbin{\beta} (K_l \vlbin{\vlne\gamma} K_r)}
   {\vlde{\mathcal{D}''}{\mathsf{S}\downarrow}
   {B_0 \vlbin{\beta} K_l}
   {\vlhy{K_0 \vlbin{\vlne\alpha} K_1}}}}}\, $.		  

   \item 
   The cases with both $ \alpha $ and $ \gamma $ left weakening or with
   $ \alpha $ unitary and $ \gamma $ left weakening are symmetric.
  \end{itemize}
  \end{itemize}
The further relevant cases come from points (13) and (14) in the original 
proof of Shallow splitting  of 
\cite{Tubella2016,Tubella:2018:SPS:3176362.3173544}. In our case, point 
(13) requires to focus also on a right weakening $ \alpha $ in a proof $ 
\mathcal{P} $ with form
$ \vlderivation{
  \vliq{\eqref{align:right weakening ax}}{}
  {(A_0\vlbin{\alpha} A_1)\vlbin{\beta} B}
  {\vlde{\mathcal{P}'}{\mathsf{S}\downarrow}
  {A_0\vlbin{\beta} B}
  {\vlhy{\vlne{u_\beta}}}
  }} \,$.
From $ \eqref{align:right weakening ax} $ we get that
$ \mathcal{D} $ is
$ \vlderivation{\vliq{\eqref{align:right weakening 
ax}}{}{B}{\vlhy{B\vlbin{\vlne\alpha}K}}} $.
So, the proof $ \mathcal{Q} $ is simply $ \mathcal{P}' $.
For the analogous of (14) with a left weakening it is enough to proceed as 
just done her above.
\end{proof}

\begin{definition}[Provable context]
    Let $ \beta $ be a weak relation with unit $ u_{\beta} $ in some Subatomic 
    system-1.1. A context $ H $ is \emph{provable} if $ 
    H\!\Set{\vlne{u_{\beta}}} = 
    \vlne{u_{\beta}}$.
    \qedh
\end{definition}
Theorem~\ref{theorem:Shallow Splitting for generalized subatomic systems}
implies that Context reduction holds exactly as formulated and proved in
\cite{Tubella2016,Tubella:2018:SPS:3176362.3173544}:

\begin{theorem}[Context reduction]
    \label{fact:Context reduction for S}
    Let $ \mathsf{S} $ be a Subatomic system-1.1 whose fragment 
    $ \mathsf{S}\!\!\downarrow $ is splittable.
    Let $ \beta $ a weak relation in $ \mathsf{S} $ with unit $ u_{\beta} $.
    For every $ A\in\mathcal{F} $ and context $ S $,
    if
    $ 
      {\vlderivation{
        \vlde{\mathcal{P}}{\mathsf{S}\downarrow}
        {S\{A\}}
        {\vlhy{\vlne {u_{\beta}}}}}} $, 
    then there is $ K\in\mathcal{F} $ and a provable context $ H$ such that 
    $ \vlupsmash
      {\vlderivation{
        \vlde{\mathcal{P}}{\mathsf{S}\downarrow}
        {S\{\, \}}
        {\vlhy{H\Set{\Set{\, }\vlbin{\beta} K}}}}} $ and
    $ \vlupsmash
       {\vlderivation{
        \vlde{\mathcal{P}}{\mathsf{S}\downarrow}
        {A \vlbin{\beta} B}
        {\vlhy{\vlne {u_{\beta}}}}}}$.
    \qedh
\end{theorem}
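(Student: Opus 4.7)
The plan is to prove the statement by induction on the structure of the context $S$, leveraging Shallow splitting (Theorem~\ref{theorem:Shallow Splitting for generalized subatomic systems}) at every inductive step in the same fashion as the namesake proof in \cite{Tubella2016,Tubella:2018:SPS:3176362.3173544}, adjusted to accommodate the new axioms \eqref{align:right weakening ax} and \eqref{align:left weakening ax} of Subatomic systems-1.1.

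First, for the base case $S = \{\,\}$, I would take $K = u_\beta$ and $H = \{\,\}$. Then $H\{\{\,\}\vlbin{\beta}K\} = \{\,\}\vlbin{\beta}u_\beta$, which by axiom~\eqref{align:ax3 gen SA} equals $\{\,\}$, and the derivation $\mathcal{P}$ of $A$ already gives a derivation of $A \vlbin{\beta} u_\beta$ once we prepend the same equational step. For the inductive step, I would decompose $S\{\,\} = (S'\{\,\} \vlbin{\alpha} T)$ (up to commutativity of $\beta$ and associativity where needed), and rewrite the bottom of $\mathcal{P}$ as a derivation of $(S'\{A\} \vlbin{\alpha} T) \vlbin{\beta} B$ from $\vlne{u_\beta}$. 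Shallow splitting applies to this shape, producing either one (if $\alpha$ is a weakening) or two (if $\alpha$ is unitary) auxiliary derivations together with a derivation of $B$ into a formula with $\vlne{\alpha}$ as its top connective. The inductive hypothesis applied to the branch containing the hole then delivers a $K$ and an $H'$, and I would reassemble the required context as $H = H'\{\, \vlbin{\alpha} T\}$ (or the symmetric form), verifying that $H$ remains provable since $H'$ does and the added $T$ sits inside an $\alpha\prec\beta$ position that does not disturb the weak unit.

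The main obstacle, and essentially the only place where the 1.1 version departs nontrivially from the 1.0 original, is the bookkeeping when $\alpha$ is a right or left weakening: Shallow splitting then returns only one subderivation, so I must be careful to route the induction through the subcontext on the correct side of $\alpha$ (the side that actually contains the hole). In the unitary case both sides are available and the original argument transfers verbatim. The equational cases involving~\eqref{align:ax1 gen SA}--\eqref{align:ax5 gen SA} do not alter $|\mathcal{P}|$ and so are absorbed transparently by the induction, exactly as in \cite{Tubella:2018:SPS:3176362.3173544}. Once these cases are dispatched the remaining cases are the routine splitting and equational-rule cases from the 1.0 proof, which go through unchanged.
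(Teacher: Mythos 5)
Your overall plan coincides with what the paper intends: the paper in fact gives no proof of Context reduction at all, merely asserting that it ``holds exactly as formulated and proved'' in the version-1.0 references once Theorem~\ref{theorem:Shallow Splitting for generalized subatomic systems} is in place, and your induction on the structure of $S$ driven by Shallow splitting is precisely that argument. Your base case ($K=u_\beta$, $H=\Set{\,}$ via~\eqref{align:ax3 gen SA}), the unitary case, and the absorption of the equational steps that do not count towards $|\mathcal{P}|$ are all as expected. You also correctly read the unbound $B$ in the statement as $K$.

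There is, however, one concrete gap, and it sits exactly at the point you yourself flag as ``the main obstacle''. When $S\Set{\,}=(T\vlbin{\alpha}S'\Set{\,})\vlbin{\beta}B$ with $\alpha$ a right weakening and the hole in the \emph{right} argument --- the one that axiom~\eqref{align:right weakening ax} erases --- point~1 of Theorem~\ref{theorem:Shallow Splitting for generalized subatomic systems} returns an auxiliary derivation only for the surviving argument $T$: there is no $\mathcal{Q}_1$ concluding $S'\Set{A}\vlbin{\beta}K_1$ to feed the inductive hypothesis. ``Routing the induction through the side that contains the hole'' is therefore simply unavailable in this sub-case. It needs a separate direct construction: for instance $K=\vlne{A}$ works for the second conclusion, since $A\vlbin{\beta}\vlne{A}$ is derivable from $\vlne{u_\beta}$ in $\mathsf{S}\!\!\downarrow$ by the induction of Proposition~\ref{proposition:subatominc systems 1.1 excluded middle} using~\eqref{align:ax4 gen SA}, \eqref{align:ax5 gen SA}, \eqref{align:ax6 gen SA} and the splitting down-rules; but one must then still exhibit a provable $H$ and a hole-preserving derivation from $H\Set{\Set{\,}\vlbin{\beta}\vlne{A}}$ down to $S\Set{\,}$, which requires reading the weakening equation bottom-up to reintroduce the erased subformula around the hole and is not obviously uniform in the hole's content. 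Until that sub-case is written out the proof is incomplete --- though, to be fair, the paper itself is silent on it as well.
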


    
\begin{theorem}[Splittable up-fragment is admissible]
\label{theorem:Suparrow is admissible}
Let $ \mathsf{S} $ be a Subatomic system-1.1 with
splittable $ \mathsf{S}\!\!\downarrow $ and $ \mathsf{S}\!\!\uparrow $ in it.
Let $ A, B, C, D\in\mathcal{F} $ and $ S $ be a context.
Let $ \alpha \in\mathcal{R} $ be strong. 
For every $ \gamma\in\mathcal{R} $ such that $ \alpha\prec \gamma $,
if
$ \vlderivation{
    \vlin{\rho\uparrow}{}
    {S\{(A\vlbin{\alpha}C)\vlbin{\gamma}(B\vlbin{\alpha}D)\}}
    {\vlde{\mathcal{P}}{\mathsf{S}\downarrow}
    {S\{(A\vlbin{\gamma}B)\vlbin{\alpha}(C\vlbin{\vlne\gamma}D)\}}
    {\vlhy{\vlne{u_{\beta}}}}}
  } $\,, with $ \rho\!\!\uparrow $ in $ \mathsf{S}\!\!\uparrow $,
then
$ \vlupsmash{
   \vlderivation{
    \vlde{\mathcal{P}'}{\mathsf{S}\downarrow}
    {S\{(A\vlbin{\alpha}C)\vlbin{\gamma}(B\vlbin{\alpha}D)\}}
    {\vlhy{\vlne{u_{\beta}}}}
  }
} $ which means that $ \rho\!\!\uparrow $ is admissible in 
$ \mathsf{S}\!\!\downarrow $.
\end{theorem}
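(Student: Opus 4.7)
The proof adapts to Subatomic systems-1.1 the cut-elimination-by-splitting strategy developed for Subatomic systems-1.0. The idea is to combine Context reduction (Theorem~\ref{fact:Context reduction for S}) with iterated applications of Shallow splitting (Theorem~\ref{theorem:Shallow Splitting for generalized subatomic systems}) to disassemble $\mathcal{P}$ into atomic pieces, and then reassemble those pieces into a cut-free proof using only the splitting down-rules of $\mathsf{S}\!\!\downarrow$ together with the idempotence of $\vlne{u_{\beta}}$ provided by axiom~\eqref{align:ax6 gen SA}.

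First I would apply Context reduction to $\mathcal{P}$ to peel off the context $S$. This yields a provable context $H$ and a formula $K$ such that $S\{\,\}$ is derivable from $H\{\{\,\}\vlbin{\beta} K\}$, together with a proof of $((A\vlbin{\gamma} B)\vlbin{\alpha}(C\vlbin{\vlne\gamma} D))\vlbin{\beta} K$ from $\vlne{u_{\beta}}$ in $\mathsf{S}\!\!\downarrow$. Since $\alpha$ is strong we have $\alpha\prec\beta$, so Shallow splitting applied on the outer $\alpha$ produces proofs $\vlne{u_{\beta}}\to (A\vlbin{\gamma} B)\vlbin{\beta} K_0$ and $\vlne{u_{\beta}}\to (C\vlbin{\vlne\gamma} D)\vlbin{\beta} K_1$ together with a derivation $K_0\vlbin{\vlne\alpha} K_1\to K$. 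A second application of Shallow splitting on the outer $\gamma$ of the first proof and the outer $\vlne\gamma$ of the second extracts proofs $\vlne{u_{\beta}}\to X\vlbin{\beta} K_X$ for each $X\in\Set{A,B,C,D}$ together with derivations $K_A\vlbin{\vlne\gamma} K_B\to K_0$ and $K_C\vlbin{\gamma} K_D\to K_1$. The three sub-cases of Shallow splitting (unitary, right-weakening, left-weakening) for $\alpha$ and for $\gamma$ must each be treated, but the adjustments are local.

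For the reassembly, starting from $\vlne{u_{\beta}}$ I would use axiom~\eqref{align:ax6 gen SA} to expand it into $(\vlne{u_{\beta}}\vlbin{\alpha}\vlne{u_{\beta}})\vlbin{\gamma}(\vlne{u_{\beta}}\vlbin{\alpha}\vlne{u_{\beta}})$, then paste in the four atomic proofs with $A,C$ on the left of $\gamma$ and $B,D$ on the right, each outer $\alpha$ pairing two $\beta$-products of matching $K_X$'s. Two applications of the splitting down-rule at $\alpha\prec\beta$ followed by one at $\gamma\prec\beta$ migrate the $\alpha$- and $\gamma$-connectives to the left factor of $\beta$ and their duals to the right, producing
\[
((A\vlbin{\alpha} C)\vlbin{\gamma}(B\vlbin{\alpha} D))\vlbin{\beta}((K_A\vlbin{\vlne\alpha} K_C)\vlbin{\vlne\gamma}(K_B\vlbin{\vlne\alpha} K_D)).
\]
A further splitting down-rule at $\vlne\gamma\prec\vlne\alpha$ rearranges the right conjunct into $(K_A\vlbin{\vlne\gamma} K_B)\vlbin{\vlne\alpha}(K_C\vlbin{\gamma} K_D)$, which the three $K$-derivations above collapse to $K$. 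Re-inserting this proof into $H\{\,\}$ yields $\mathcal{P}'$.

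The main obstacle I expect is the bookkeeping around the order $\prec$ and its interaction with negation: the second Shallow splitting asks for $\gamma\prec\beta$ and $\vlne\gamma\prec\beta$, and the last rearrangement asks for $\vlne\gamma\prec\vlne\alpha$. The $\prec$-consistency of $\vlne{(\ )}$ only guarantees that strongness and weakness are swapped, not the full order-reversal; so these inequalities must either be tacitly assumed (as is natural for Subatomic systems such as $\mathsf{P}$, where negation does reverse $\prec$) or carried as an extra hypothesis, and the boundary case $\gamma=\beta$ — in which Shallow splitting no longer applies on the inner $\gamma$ — should be treated separately, possibly by a direct equational rearrangement since $\gamma$ then coincides with the outermost weak relation. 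A subsidiary obstacle is that the right- and left-weakening cases of Shallow splitting collapse some of the four atomic proofs; the reassembly then runs on a reduced set of components, and a careful case analysis refining the one already used in the proof of Theorem~\ref{theorem:Shallow Splitting for generalized subatomic systems} is needed to check that the $K$-residues still compose soundly.
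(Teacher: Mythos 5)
Your proposal follows essentially the same route as the paper's proof: Context reduction to strip the context $S$, Shallow splitting applied first on the outer $\alpha$ and then on $\gamma$ and $\vlne{\gamma}$ to extract the component proofs, and a reassembly starting from $\vlne{u_{\beta}}$ that uses axiom~\eqref{align:ax6 gen SA}, the extracted proofs, and the splitting down-rules before re-entering the provable context $H$. The weakening-case complications you flag at the end are exactly the ``case specific to version 1.1'' that the paper singles out and develops, so the plan is sound and matches the paper's argument.
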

\begin{proof}
    We develop a case specific to version 1.1 where $ \gamma$ and, hence $ 
    \vlne{\gamma} $, is a right weakening.
%
    Theorem~\ref{fact:Context reduction for S} on $ \mathcal{P} $ implies
    $ \vlderivation{
        \vlde{\mathcal{D}}{\mathsf{S}\downarrow}
        {S\{\, \}}
        {\vlhy{H\Set{\Set{\, }\vlbin{\beta} K}}}} $ and
    $ \vlderivation{
        \vlde{\mathcal{Q}}{\mathsf{S}\downarrow}
        {((A\vlbin{\gamma}B)\vlbin{\alpha}(C\vlbin{\vlne\gamma}D)) \vlbin{\beta} K}
        {\vlhy{\vlne {u_{\beta}}}}}$ with $ H $ provable.
    Theorem~\ref{theorem:Shallow Splitting for generalized subatomic systems} on $ \mathcal{Q} $
    implies 
    $ \vlderivation{
        \vlde{\mathcal{D}'}{\mathsf{S}\downarrow}
        {K}
        {\vlhy{Q_1\vlbin{\beta}Q_2}}}$ and
    $ \vlderivation{
        \vlde{\mathcal{Q}_1}{\mathsf{S}\downarrow}
        {(A\vlbin{\gamma}B) \vlbin{\beta} Q_1}
        {\vlhy{\vlne {u_{\beta}}}}}$ and
    $ \vlderivation{
        \vlde{\mathcal{Q}_2}{\mathsf{S}\downarrow}
        {(C\vlbin{\vlne\gamma}D) \vlbin{\beta} Q_2}
        {\vlhy{\vlne {u_{\beta}}}}}$
    .
    Theorem~\ref{theorem:Shallow Splitting for generalized subatomic systems} on $ \mathcal{Q}_1 $
    implies
    $ \vlderivation{
        \vlde{\mathcal{D}_1}{\mathsf{S}\downarrow}
        {Q_1}
        {\vlhy{Q_A\vlbin{\vlne \gamma}Q_B}}}$ and
    $ \vlderivation{
        \vlde{\mathcal{Q}_A}{\mathsf{S}\downarrow}
        {A \vlbin{\beta} Q_A}
        {\vlhy{\vlne {u_{\beta}}}}}$
    .
    Theorem~\ref{theorem:Shallow Splitting for generalized subatomic systems} on $ \mathcal{Q}_2 $
    implies
    $ \vlderivation{
        \vlde{\mathcal{D}_2}{\mathsf{S}\downarrow}
        {Q_2}
        {\vlhy{Q_C\vlbin{\gamma}Q_D}}}$ and
    $ \vlderivation{
        \vlde{\mathcal{Q}_C}{\mathsf{S}\downarrow}
        {C \vlbin{\beta} Q_C}
        {\vlhy{\vlne {u_{\beta}}}}}$
    .
    So\
    $\vlderivation{
      \vlde{}{\mathsf{S}\downarrow}
      {S\!\Set{(A\vlbin{\gamma}B)\vlbin{\alpha}(C\vlbin{\vlne\gamma}D)}}
      {\vlde{}{\mathsf{S}\downarrow}
      {H\!\Set{((A\vlbin{\gamma}B)\vlbin{\alpha}(C\vlbin{\vlne\gamma}D))\vlbin{\beta}K}}
      {\vlde{}{\mathsf{S}\downarrow}
      {H\!\Set{((A\vlbin{\gamma}B)\vlbin{\alpha}(C\vlbin{\vlne\gamma}D))
              \vlbin{\beta}(Q_1\vlbin{\beta}Q_2)}}
      {\vlin{}{}
      {H\!\Set{((A\vlbin{\gamma}B)\vlbin{\alpha}(C\vlbin{\vlne\gamma}D))
                 \vlbin{\beta}((Q_A\vlbin{\vlne \gamma}Q_B)\vlbin{\beta}(Q_C\vlbin{\gamma}Q_D))}}
      {\vlin{}{}
      {H\!\Set{((A\vlbin{\gamma}B)\vlbin{\alpha}(C\vlbin{\vlne\gamma}D))
                 \vlbin{\beta}((Q_A\vlbin{\beta}Q_C)\vlbin{\vlne \gamma}(Q_B\vlbin{\beta}Q_D))}}
      {\vlin{}{}
      {H\!\Set{((A\vlbin{\alpha}C)\vlbin{\gamma}(B\vlbin{\alpha}D))
                \vlbin{\beta}((Q_A\vlbin{\beta}Q_C)\vlbin{\vlne \gamma}(Q_B\vlbin{\beta}Q_D))}}
      {\vliq{}{}
      {H\!\Set{((A\vlbin{\alpha}C)\vlbin{\beta}(Q_A\vlbin{\beta}Q_C))
               \vlbin{\gamma}((B\vlbin{\alpha}D)\vlbin{\beta}(Q_B\vlbin{\beta}Q_D))}}
      {\vlin{}{}
      {H\!\Set{(A\vlbin{\alpha}C)\vlbin{\beta}(Q_A\vlbin{\beta}Q_C)}}
      {\vlde{}{\mathsf{S}\downarrow}
      {H\!\Set{(A\vlbin{\beta}Q_A)\vlbin{\alpha}(C\vlbin{\beta}Q_C)}}
      {\vliq{}{}
      {H\!\Set{\vlne{u_{\beta}}\vlbin{\alpha}\vlne{u_{\beta}}}}
      {\vlhy{H\!\Set{\vlne{u_{\beta}}}}}}}}}}}}}}
      }\ .$
\end{proof}

\section{The splittable fragment $ \mathsf{P}\!\!\downarrow$ in $\mathsf{P}$}
\label{section:Splitting for Pdownarrow}
In this section we take advantage of having identified the properties that a 
Subatomic system-1.1 must meet to enjoy the cut-elimination property.
From Definitions~\ref{definition:splittable subatomic systems-1.1},
\ref{definition:Up-fragment of S} 
and~\ref{definition: Subatomic system 1.1 P} it follows:
\begin{fact}
    \label{fact:Properties of Sdownarrow}
    The Splittable \emph{down}-fragment $ \mathsf{P}\!\!\downarrow $ 
    of $ \mathsf{P}$ contains the \emph{down}-rules:
    \begin{center}
        \newcommand{\localGlb}   {\vlor}
        \newcommand{\localLub}   {\vlan}
        \newcommand{\localnotGlb}{\vlan}
        \newcommand{\localnotLub}{\vlor}
        \resizebox{.975\textwidth}{!}{
            \begin{tabular}{cccccc}
                $\vlinf{ai_0\downarrow}{} 
                {(A \vlbin{\pi_0} C)\vlbin{\localGlb}(B \vlbin{\pi_0} D)}
                {(A \vlbin{\localGlb} B)\vlbin{\pi_0}(C \vlbin{\localGlb} D)}$
                &\qquad\qquad&
                $\vlinf{ai_1\downarrow}{} 
                {(A \vlbin{\pi_1} C)\vlbin{\localGlb}(B \vlbin{\pi_1} D)}
                {(A \vlbin{\localGlb} B)\vlbin{\pi_1}(C \vlbin{\localGlb} D)}$
                &\qquad\qquad&
                $\vlinf{s\downarrow}{}
                {(A \vlbin{\localLub} C)\vlbin{\localGlb}(B \vlbin{\localnotLub} D)}
                {(A \vlbin{\localGlb} B)\vlbin{\localLub}(C \vlbin{\localGlb}    D)}$
            \end{tabular}
        } 
    \end{center}
    while the Splittable \emph{up}-fragment $ \mathsf{P}\!\!\uparrow $ 
    of $ \mathsf{P}$ contains the \emph{up}-rules:
    \begin{center}
    \newcommand{\localGlb}   {\vlan}
    \newcommand{\localLub}   {\vlor}
    \newcommand{\localnotGlb}{\vlor}
    \newcommand{\localnotLub}{\vlan}
    \resizebox{.975\textwidth}{!}{
        \begin{tabular}{cccccc}
            $\vlinf{ai_0\uparrow}{} 
            {(A \vlbin{\localGlb} C)\vlbin{\pi_0}(B \vlbin{\localGlb} D)}
            {(A \vlbin{\pi_0} B)\vlbin{\localGlb}(C \vlbin{\pi_0} D)}
            $
            &\qquad\qquad&
            $\vlinf{ai_1\uparrow}{} 
            {(A \vlbin{\localGlb} C)\vlbin{\pi_1}(B \vlbin{\localGlb} D)}
            {(A \vlbin{\pi_1} B)\vlbin{\localGlb}(C \vlbin{\pi_1} D)}
            $
            &\qquad\qquad&
            $\vlinf{s\uparrow}{}
            {(A \vlbin{\localGlb} C)\vlbin{\localLub}(B \vlbin{\localGlb}  D)}
            {(A \vlbin{\localLub} B)\vlbin{\localGlb}(C \vlbin{\localnotLub} D)}
            $
        \end{tabular}
    } 
\end{center}
\end{fact}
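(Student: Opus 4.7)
The plan is to verify, step by step, that all conditions of Definition~\ref{definition:splittable subatomic systems-1.1} are met by $\mathsf{P}$ when one takes $\beta = \vlor$ with unit $u_{\vlor} = \bFF$, and then to read off the surviving splitting down-rules from the general splitting schema in Definition~\ref{definition:A possible generalization of Subatomic systems}. The up-fragment then follows mechanically from Definition~\ref{definition:Up-fragment of S}.

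First I would identify the weak relation. By Definition~\ref{definition: Order relation among the relations of P}, $\vlor$ is the unique maximal element of $\prec_{\mathsf{P}}$, hence it is the unique weak relation, and $\vlan,\pi_0,\pi_1$ all satisfy $\alpha \prec_{\mathsf{P}} \vlor$. This already settles clause~1 of Definition~\ref{definition:splittable subatomic systems-1.1}.

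Second, I would discharge the six axioms required by clause~2 for $\beta = \vlor$ and $u_{\vlor} = \bFF$. Axioms~(\ref{align:ax1 gen SA}) and~(\ref{align:ax2 gen SA}) are special cases of~(\ref{align:axiom Associativity}) and~(\ref{align:axiom Commutativity}); axiom~(\ref{align:ax3 gen SA}) is precisely~(\ref{align:axiom unit of or}); and axioms~(\ref{align:ax4 gen SA}) and~(\ref{align:ax5 gen SA}) are~(\ref{align:axiom constant excluded middle p}) and~(\ref{align:axiom excluded middle p}) respectively. The only axiom needing a short derivation is~(\ref{align:ax6 gen SA}), which requires $\bTT \vlbin{\alpha} \bTT = \bTT$ for each $\alpha \prec_{\mathsf{P}} \vlor$. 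For $\alpha = \pi_0$, I would instantiate $A = B = \bTT$ in~(\ref{align:axiom right weakening pi0}); symmetrically for $\alpha = \pi_1$ using~(\ref{align:axiom right weakening pi1}). For $\alpha = \vlan$, I would negate the instance $\bFF \vlor \bFF = \bFF$ of~(\ref{align:axiom  atomic constant idempotence P}) and apply De Morgan~(\ref{align:axiom de morgan or and}) to obtain $\bTT \vlan \bTT = \bTT$.

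Third, I would enumerate the splitting down-rules of $\mathsf{P}\!\!\downarrow$ by instantiating the generic splitting schema with $\beta = \vlor$ and each $\alpha \prec_{\mathsf{P}} \vlor$. Taking $\alpha = \vlan$, together with $\vlne{\vlan} = \vlor$ from~(\ref{align:axiom de morgan or and}), produces exactly $s\!\!\downarrow$; taking $\alpha = \pi_0$, self-dual by~(\ref{align:axiom de morgan pi}), produces $ai_0\!\!\downarrow$; and $\alpha = \pi_1$ produces $ai_1\!\!\downarrow$. Clause~3 of Definition~\ref{definition:splittable subatomic systems-1.1} forbids the contractive rules $m\!\!\downarrow,c_0\!\!\downarrow,c_1\!\!\downarrow$ of Definition~\ref{definition: Subatomic system 1.1 P}, so these three splitting rules exhaust $\mathsf{P}\!\!\downarrow$ up to equational rules. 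Finally, Definition~\ref{definition:Up-fragment of S} gives $\mathsf{P}\!\!\uparrow$ as the corresponding up-rules $s\!\!\uparrow, ai_0\!\!\uparrow, ai_1\!\!\uparrow$. I do not anticipate any genuine obstacle: the argument is entirely a matter of matching axioms with rule schemes, with only the $\alpha = \vlan$ case of~(\ref{align:ax6 gen SA}) requiring the minor detour through negation.
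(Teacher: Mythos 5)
Your proposal is correct and matches the paper's approach: the paper states this Fact as following directly from Definitions~\ref{definition:splittable subatomic systems-1.1}, \ref{definition:Up-fragment of S} and~\ref{definition: Subatomic system 1.1 P} without spelling out the verification, and your argument is exactly that verification made explicit (weak relation $\vlor$ with unit $\bFF$, axiom matching, and instantiation of the splitting schema for $\alpha\in\{\vlan,\pi_0,\pi_1\}$). The small detour for $\bTT\vlan\bTT=\bTT$ via negation of~\eqref{align:axiom  atomic constant idempotence P} is consistent with the paper's remark that the missing axioms of Definition~\ref{definition: Congruence among formulas of P} are derived by negation.
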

%
%
%
%
Theorem~\ref{theorem:Suparrow is admissible} holds on $ \mathsf{P} $, hence on 
the subset of rules of $ \mathsf{P}\!\!\downarrow $ and $ 
\mathsf{P}\!\!\uparrow $. So, we get:
\begin{corollary}
Every up-rule of \,$ \mathsf{P}\!\!\uparrow $ is admissible in 
$ \mathsf{P}\!\!\downarrow $.
\qedh
\end{corollary}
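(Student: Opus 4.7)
The plan is to obtain the corollary as an immediate application of Theorem~\ref{theorem:Suparrow is admissible}, instantiated at $\mathsf{S} = \mathsf{P}$. Since that theorem has already been proved for generic Subatomic systems-1.1, the only work left is to verify that $\mathsf{P}$ meets its two structural hypotheses: $\mathsf{P}\!\!\downarrow$ must be a Splittable down-fragment in the sense of Definition~\ref{definition:splittable subatomic systems-1.1}, and each rule of $\mathsf{P}\!\!\uparrow$ must instantiate the medial shape that the theorem addresses.

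For the first hypothesis, I would go through Definition~\ref{definition:splittable subatomic systems-1.1} clause by clause, using the unique weak relation of $\mathsf{P}$, namely $\vlor$, whose unit is $\bFF$, so that $\vlne{u_{\vlor}} = \bTT$. Associativity and commutativity of $\vlor$ are axioms of $=_{\mathsf{P}}$; the unit equation $A \vlor \bFF = A$ is axiom~\eqref{align:axiom unit of or}; excluded middle for constants and for variables is given by~\eqref{align:axiom constant excluded middle p} and~\eqref{align:axiom excluded middle p}. The remaining clause, $\bTT \vlbin{\alpha} \bTT = \bTT$ for every $\alpha \prec_{\mathsf{P}} \vlor$, holds for $\alpha \in \{\pi_0, \pi_1\}$ by the weakening axioms~\eqref{align:axiom right weakening pi0} and~\eqref{align:axiom right weakening pi1}, and for $\alpha = \vlan$ by negating the atomic constant idempotence $\bFF \vlor \bFF = \bFF$ via Fact~\ref{fact:Equation derivations}.

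For the second hypothesis, I would confirm that each rule listed in Fact~\ref{fact:Properties of Sdownarrow} matches the redex $(A \vlbin{\gamma} B) \vlbin{\alpha} (C \vlbin{\vlne{\gamma}} D)$ from Theorem~\ref{theorem:Suparrow is admissible}, with $\alpha$ strong and $\alpha \prec \gamma$. The strong relation of $\mathsf{P}$ is $\vlan$, so $\alpha = \vlan$ throughout. For $ai_j\!\!\uparrow$ set $\gamma = \pi_j$, which is self-dual ($\vlne{\gamma} = \gamma$), and observe $\vlan \prec_{\mathsf{P}} \pi_j$; since $\gamma$ here is a right or left weakening, the case explicitly developed inside the proof of Theorem~\ref{theorem:Suparrow is admissible} (and its left-weakening symmetric) applies. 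For $s\!\!\uparrow$ set $\gamma = \vlor$ with $\vlne{\gamma} = \vlan$ and $\vlan \prec_{\mathsf{P}} \vlor$; this falls under the unitary case inherited from the original version-1.0 splitting argument.

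Once the hypotheses are checked, the corollary follows from a single invocation of the theorem applied to each of $ai_0\!\!\uparrow$, $ai_1\!\!\uparrow$ and $s\!\!\uparrow$. No real obstacle remains: the conceptual content has already been absorbed into Section~\ref{section:Splitting in subatomic systems-1.1}, and the proof here amounts to bookkeeping. The only point that could have caused trouble is the presence of projections $\pi_j$ among the $\gamma$-positions, which was precisely the reason for generalising Shallow splitting to the right/left weakening cases in Theorem~\ref{theorem:Shallow Splitting for generalized subatomic systems}; once that generalisation is in place, $\mathsf{P}$ is covered without further effort.
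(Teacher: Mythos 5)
Your proposal is correct and follows exactly the paper's route: the paper likewise obtains the corollary as a direct instantiation of Theorem~\ref{theorem:Suparrow is admissible} at $\mathsf{S}=\mathsf{P}$, relying on Fact~\ref{fact:Properties of Sdownarrow} for the identification of $\mathsf{P}\!\!\downarrow$ and $\mathsf{P}\!\!\uparrow$. The only difference is that you spell out the clause-by-clause verification that $\mathsf{P}\!\!\downarrow$ is splittable (including deriving $\bTT\vlan\bTT=\bTT$ by negation) and the matching of each up-rule to the medial redex, details the paper leaves implicit.
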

\vllinearfalse
\section{The system $ \mathsf{P} $ and Post's Lattice}
\label{section:Subatomic systems and Post's Lattice}
We show that $ \mathsf{P} $ is related to Post's Lattice \cite{Post1941}. 
It follows that $ \mathsf{P} $ extends Propositional logic without relying on 
any representations of the atoms of $ \mathsf{P} $ in terms of sub-atoms, i.e. 
in terms of some encoding which is based on self-dual non-commutative relations.

\begin{definition}[Clones \cite{Post1941}]
Let $ B $ be a set of boolean operators.
A \emph{clone} $ [B] $ is the least set of boolean operators of any arity,
closed under composition that contains:
(i) propositional variables $ x, y, z, \ldots $;
(ii) projections of every finite arity, $ \pi^1_1(x) = x $ included;
(iii) $ f \in B $ applied to propositional variables.
\qedh
\end{definition}
The class of all clones is Post's Lattice which is infinite and complete 
\cite{Post1941}. The top of the lattice is $ \mathsf{C_1} = [\vlor, \bTT, 
\vlne{\ }] $ which strictly contains five pairwise incomparable maximal clones:
\begin{align*}
&&\mathsf{C_2} & = [\vlan, \rightarrow, \leftarrow, \vlor, \bTT]\\
&&\mathsf{C_3} & = [\bFF, \vlan, \vlne{\leftarrow}, \vlne{\rightarrow}, \vlor]\\
&&\mathsf{L_1} & = [\bFF, \vlan, \vlor, \bTT ]\\
&&\mathsf{A_1} & = [\bFF, \Leftrightarrow, \oplus, \bTT, \vlne{\ }]\\
&&\mathsf{D_3} & = [\underbrace{(x\vlan y)\vlor(y\vlan z)\vlor(x\vlan 
z)}_{\operatorname{min}(x,y,z)}, 
         \underbrace{\vlne{(x\vlan y)\vlor(y\vlan z)\vlor(x\vlan z)}}_{\operatorname{min}(x,y,z)}]
\end{align*}
whose names come from \cite{Post1941,DBLP:journals/mst/Lewis79}.
%
%
%
%
%
%

\begin{proposition}[Soundness of $ \mathsf{P} $]
\label{proposition:Soundness of P wrt C1}
The Subatomic system-1.1 $ \mathsf{P} $ is sound for $ \mathsf{C_1} $.
\Ie, let $ A, B_1,\ldots,B_n \in \mathcal{F}_{\mathsf{P}}$ be such that $ B \in \mathsf{C_1}$ exists
and $ A $ is equivalent to $ B $, up to De Morgan equivalences.
Given  
$ \vlderivation{\vlder{\mathcal{P}}{\mathsf{P}}{A\vlor \vlne{B_1}\vlor \ldots\vlor \vlne{B_n}}{\bTT}} $,
if $ B_1\vlan \ldots\vlan B_n $ is true, then $ A $ is true.
\end{proposition}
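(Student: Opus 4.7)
The plan is to give a standard semantic soundness argument: interpret each formula of $\mathcal{F}_\mathsf{P}$ under a two-valued valuation, show that every axiom of $=_\mathsf{P}$ and every inference rule of $\mathsf{P}$ preserves truth from premise (top) to conclusion (bottom), and then specialise the resulting validity of $A\vlor\vlne{B_1}\vlor\cdots\vlor\vlne{B_n}$ to a valuation that makes $B_1\vlan\cdots\vlan B_n$ true.

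First I would fix a valuation $\sigma\colon\mathcal{V}_\mathsf{P}\to\{\bTT,\bFF\}$ and extend it by interpreting $\vlan,\vlor$ as classical conjunction and disjunction, $\vlne{(\ )}$ as complement, and $A\vlbin{\pi_0}B,\,A\vlbin{\pi_1}B$ as the projections on the first and second argument. Each axiom in Definitions~\ref{definition: negation among formulas of P} and~\ref{definition: Congruence among formulas of P} is then a standard boolean identity — associativity and commutativity of $\vlor,\vlan$, the identities $A\vlor\bFF=A$ and $A\vlbin{\pi_0}B=A$, excluded middle and idempotence on constants and variables, and De Morgan (the self-duality of $\pi_0,\pi_1$ follows because complementation commutes with either projection). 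Hence $=_\mathsf{P}$ refines semantic equivalence and the equational rules preserve truth.

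Next I would verify semantic soundness for every non-equational rule of $\mathsf{P}$. The projection rules $ai_j\!\!\downarrow$, $c_j\!\!\downarrow$, $ai_j\!\!\uparrow$, $c_j\!\!\uparrow$ are immediate because, modulo the axiom $A\vlbin{\pi_j}B=A$ (resp.\ $=B$), premise and conclusion collapse to the same subformula. The rule $s\!\!\downarrow$ reduces to the switch tautology $(A\vlor B)\vlan(C\vlor D)\models(A\vlan C)\vlor B\vlor D$ and $m\!\!\downarrow$ to the medial tautology $(A\vlan B)\vlor(C\vlan D)\models(A\vlor C)\vlan(B\vlor D)$; the duals $s\!\!\uparrow$ and $m\!\!\uparrow$ reduce to $(A\vlor C)\vlan(B\vlan D)\models(A\vlan B)\vlor(C\vlan D)$ and $(A\vlan C)\vlor(B\vlan D)\models(A\vlor B)\vlan(C\vlor D)$. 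Each case is a small truth-table inspection.

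Finally, an induction on the length of $\mathcal{P}$ shows that $A\vlor\vlne{B_1}\vlor\cdots\vlor\vlne{B_n}$ evaluates to $\bTT$ under every valuation, since the top of $\mathcal{P}$ is $\bTT$ and every rule step preserves truth. Fixing a $\sigma$ with $B_1\vlan\cdots\vlan B_n$ true, each $\vlne{B_i}$ is $\bFF$, so the tail disjunction is $\bFF$ and the whole disjunction forces $A$ to be true. I do not expect a real obstacle; the only mildly fiddly item is the soundness of the up-rules $s\!\!\uparrow$ and $m\!\!\uparrow$, whose premises look asymmetric because $\vlne{(\ )}$ turns $\vlan$ into $\vlor$ in the fourth slot of the medial scheme, but they still reduce to a four-case truth table.
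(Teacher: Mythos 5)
Your proposal is correct and follows essentially the same route as the paper: the paper's proof simply asserts that every rule of $\mathsf{P}$ preserves truth from premise to conclusion and then reads the derived formula $A\vlor\vlne{B_1}\vlor\ldots\vlor\vlne{B_n}$ as the implication $B_1\vlan\ldots\vlan B_n\Rightarrow A$, exactly your final step. The only difference is that you spell out the rule-by-rule and axiom-by-axiom truth-table verification that the paper leaves implicit ``by definition,'' and your checks of the switch, medial, and projection cases are all accurate.
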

\begin{proof}
    $ \mathcal{P} $ only contains rules of $ \mathsf{P} $.
    By definition, the conclusion of every rule in $ \mathsf{P} $ is true 
    whenever its premise is true. Since the formula on top of $ \mathcal{P} $ 
    is $ \bTT $ also 
    $ A\vlor \vlne{B_1}\vlor \ldots\vlor \vlne{B_n} =
      \vlne{B_1\vlan \ldots\vlan B_n} \vlor A = B_1\vlan \ldots\vlan B_n \vlbin{\Rightarrow} A $ 
    must be true. Forcefully, the truth of
     $ B_1\vlan \ldots\vlan B_n $ implies the one of $ A $.
\end{proof}
The proof of completeness follows a standard technique.
\begin{definition}
    Let $ A[x_1,\ldots,x_n] $ denote any formula of $ \mathcal{F}_{\mathsf{P} }$ such that 
    $ x_1,\ldots, x_n $ are all and only its variables. Let $ T_A $ be the following truth table
    of $ A[x_1,\ldots,x_n] $:
    \begin{center}
     \begin{tabular}{c|c|c|c||c}
         $ x_1 $  & $ x_2 $  & \ldots & $ x_n $  & $ A[x_1,\ldots,x_n] $ \\\hline\hline
         $ \bFF$  & $ \bFF$  & \ldots & $ \bFF$  & $ \chi_0 $\\\hline
         $ \bFF$  & $ \bFF$  & \ldots & $ \bTT$  & $ \chi_1 $\\\hline
         $\vdots$ & $\vdots$ &        & $\vdots$ & $\vdots  $\\\hline
         $ \bTT$  & $ \bTT$  & \ldots & $ \bTT$  & $ \chi_{2^n-1} $
     \end{tabular}
    \end{center}
    where $ \chi_l \in\Set{\bFF, \bTT} $, for every $0 \leq l \leq 2^n-1$.
    For every $ 0 \leq l \leq 2^n-1$ and every $ B \in \Set{x_1,\ldots,x, A[x_1,\ldots,x_n]} $
    let $ T_A(l, B) $ the entry of $ T_A $ at line $ l $ and column $ B $. By 
    definition, let $ \tau $ be the following map:
    \begin{align*}
       \tau(l, x_i)               & = x_i                     && \textrm{ if } T_A(l, 
       x_i) = \bTT\\
       \tau(l, x_i)               & = \vlne{x_i}              && \textrm{ if } T_A(l, 
       x_i) = \bFF\\
       \tau(l, A[x_1,\ldots,x_n]) & = A[x_1,\ldots,x_n]       && \textrm{ if } T_A(l, 
       A[x_1,\ldots,x_n]) = \bTT\\
       \tau(l, A[x_1,\ldots,x_n]) & = \vlne{A[x_1,\ldots,x_n]}&& \textrm{ if } T_A(l, 
       A[x_1,\ldots,x_n]) = \bFF
       \enspace .
    \end{align*}
\end{definition}

\begin{fact}[Arbitrary projections in $ \mathsf{P} $]
\label{fact:arbitrary projections}
For every $ x_i\in\mathcal{V}_{\mathsf{P}} $, the derivation
$ \vlder{}{}
  {\pi_i(\ldots,x_i,\ldots)}{x_i} $
exists by applying suitable combinations of the axioms~\eqref{align:axiom right 
weakening pi0} and~\eqref{align:axiom right weakening pi1}. The same holds for every 
$ \vlne{x_i}\in\vlne{\mathcal{V}}_{\mathsf{P}} $
\qedh
\end{fact}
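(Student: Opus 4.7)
The plan is to read the two equational axioms \eqref{align:axiom right weakening pi0} and \eqref{align:axiom right weakening pi1} upwards as ways to introduce, on the right of a $\pi_0$ or on the left of a $\pi_1$, an arbitrary sub-formula that the projection will then simply discard. Iterating them lets me grow the single leaf $x_i$ into a binary $\pi_0/\pi_1$-tree whose $n$ leaves list $x_1,\ldots,x_n$ in order and whose semantic value is $x_i$, i.e.\ the $n$-ary projection onto the $i$-th argument.

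Concretely, I proceed in two phases, which amount to a trivial induction on $n-i$ followed by one on $i$. Starting from the single formula $x_i$, I apply \eqref{align:axiom right weakening pi0} upwards $n-i$ times, obtaining $x_i \vlbin{\pi_0} x_{i+1} \vlbin{\pi_0} \cdots \vlbin{\pi_0} x_n$, left-associated, say; axiom \eqref{align:axiom Associativity} lets me re-bracket at will. This installs every argument sitting to the right of $x_i$. I then apply \eqref{align:axiom right weakening pi1} upwards $i-1$ times, prepending $x_{i-1}$, then $x_{i-2}$, and so on down to $x_1$, reaching
\[
x_1 \vlbin{\pi_1} \bigl( x_2 \vlbin{\pi_1} ( \cdots \vlbin{\pi_1} ( x_{i-1} \vlbin{\pi_1} ( x_i \vlbin{\pi_0} x_{i+1} \vlbin{\pi_0} \cdots \vlbin{\pi_0} x_n ) ) \cdots ) \bigr).
\]
Every step of the resulting derivation is an instance of either \eqref{align:axiom right weakening pi0} or \eqref{align:axiom right weakening pi1}, as the statement requires, and the final formula computes the $n$-ary projection onto its $i$-th leaf.

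For the dual statement concerning $\vlne{x_i}$, axiom \eqref{align:axiom de morgan pi} makes both $\pi_0$ and $\pi_1$ self-dual, so the same derivation performed leaf-wise on the negated variables $\vlne{x_1},\ldots,\vlne{x_n}$ connects $\vlne{x_i}$ to the formula obtained from the one above by replacing each $x_j$ with $\vlne{x_j}$, which again computes an $n$-ary projection.

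The main obstacle, such as it is, is pure bookkeeping: verifying that the $n-1$ equational steps can indeed be laid down in this order and that the final bracketing matches the intended $n$-ary projection. Since each axiom used is equational, applicable at the root of the current formula and in either direction, there is essentially no combinatorial content beyond this.
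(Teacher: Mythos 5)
Your proposal is correct and takes essentially the same route the paper intends: the paper states this as a \emph{Fact} with no explicit proof, and the argument it gestures at --- reading the equational axioms~\eqref{align:axiom right weakening pi0} and~\eqref{align:axiom right weakening pi1} upwards to install, one by one, the arguments that the nested binary projections will discard around the surviving leaf $x_i$ --- is exactly what you spell out. The case of $\vlne{x_i}$ goes through verbatim since those axioms hold for arbitrary formulas, so your appeal to self-duality via~\eqref{align:axiom de morgan pi} is harmless but not even needed.
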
    

\begin{proposition}[Compactness of $ \mathsf{P} $]
\label{proposition:Compactness of P}
    Let $ A[x_1,\ldots,x_n] \in \mathcal{F}_{\mathsf{P}} $ be given. Then, for every 
    $ 0 \leq l \leq 2^n-1$, the proof
    $ \vlder{\mathcal{P}}
      {\mathsf{P}}
      {\tau(l, A[x_1,\ldots,x_n])\vlor\vlne{\tau(l, x_0)}\vlor\ldots\vlor\vlne{\tau(l, 
      x_n)}}
      {\bTT} $ exists.
\end{proposition}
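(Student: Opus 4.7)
The plan is to proceed by structural induction on $A[x_1,\ldots,x_n]$, with $l$ fixed throughout. It is convenient to strengthen the statement so that the disjunction of literals may range over any super-set $V$ of the variables of $A$: this ``weakening by a literal'' is purely equational, since for any literal $L$ the chain $\bTT \vlor L = (L \vlor \vlne{L}) \vlor L = (L \vlor L) \vlor \vlne{L} = L \vlor \vlne{L} = \bTT$ uses only Axioms~\eqref{align:axiom constant excluded middle p}, \eqref{align:axiom excluded middle p}, \eqref{align:axiom  atomic constant idempotence P} and~\eqref{align:axiom  atomic idempotence P}, together with associativity and commutativity of $\vlor$.

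The base cases handle $A$ a constant or a literal. For $A \in \{\bTT, \bFF\}$ the target reduces to $\bTT$ via the constant axioms; for $A \in \{x_i, \vlne{x_i}\}$ it is exactly an excluded-middle equation, optionally padded by the literal weakening described above.

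For the inductive step $A = B \vlbin{\alpha} C$ I case-split on $\alpha \in \{\vlan, \vlor, \pi_0, \pi_1\}$ and on the truth values $T_A(l, B)$ and $T_A(l, C)$. The $\pi_0$ and $\pi_1$ cases reduce directly to the IH on the selected argument by Fact~\ref{fact:arbitrary projections}, since the value of $\tau(l, A)$ is then determined by that argument alone. When $\alpha = \vlan$ with both $B, C$ true at $l$ (so $\tau(l, A) = B \vlan C$), I start from $\bTT = \bTT \vlan \bTT$ (unit of $\vlan$), apply the two IH derivations in parallel inside each conjunct to reach $(B \vlor \vec{D}) \vlan (C \vlor \vec{D})$, use $s\downarrow$ to obtain $(B \vlan C) \vlor (\vec{D} \vlor \vec{D})$, and contract the duplicated $\vec{D}$ via Corollary~\ref{lemma:Generic idempotence in P}. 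The dual ``both-false'' sub-case of $\alpha = \vlor$ (where $\tau(l, A) = \vlne{B} \vlan \vlne{C}$) is handled symmetrically.

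The main obstacle is the \emph{mixed} sub-cases, for instance $A = B \vlor C$ with $B$ true and $C$ false, where the IH on $B$ gives $B \vlor \vec{D}$ but the target is $(B \vlor C) \vlor \vec{D}$, morally a weakening by the compound formula $C$. The planned bridge is to combine \emph{both} IH derivations — $B \vlor \vec{D}$ and $\vlne{C} \vlor \vec{D}$ — through $\bTT = \bTT \vlan \bTT$ and $s\downarrow$ into an intermediate of shape $(B \vlan \vlne{C}) \vlor \vec{D}$, then graft in $C \vlor \vlne{C} = \bTT$ (Corollary~\ref{lemma:Generic excluded middle in P}) and rearrange with $s\downarrow$, $m\downarrow$ and contraction to reach $(B \vlor C) \vlor \vec{D}$. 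Should a direct in-$\mathsf{P}\!\!\downarrow$ construction prove elusive, the fallback is to invoke the admissible up-rules of $\mathsf{P}\!\!\uparrow$ — Theorem~\ref{theorem:Suparrow is admissible} — to realize the classical step $B \vlan \vlne{C} \Rightarrow B \vlor C$ and then re-eliminate the up-rules from the final derivation.
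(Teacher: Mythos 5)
Your overall architecture is sound and is in fact more explicit than the paper's own argument, which disposes of the whole inductive step for $\vlan$ and $\vlor$ with the single phrase ``it is enough to standardly apply the inductive hypothesis'': your literal-padding strengthening, the base cases, the projection cases via Fact~\ref{fact:arbitrary projections}, and the homogeneous cases (both conjuncts true, resp.\ both disjuncts false) via $\bTT = \bTT\vlan\bTT$, two parallel uses of the inductive hypothesis, $s\downarrow$ and Corollary~\ref{lemma:Generic idempotence in P} all go through. The gap is exactly where you place it, in the mixed sub-cases, but neither of your two bridges closes it. Writing $\vec{D}$ for the disjunction of literals, your first plan reaches $(B\vlan\vlne{C})\vlor\vec{D}$ and then needs to turn $B\vlan\vlne{C}$ into $B\vlor C$ inside the context $\Set{\ }\vlor\vec{D}$. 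Every way of grafting $C\vlor\vlne{C}=\bTT$ into this and rearranging with $s\downarrow$ or $m\downarrow$ leaves a residual conjunct that must be discarded: for instance $(B\vlan\vlne{C})\vlan(C\vlor\vlne{C})$ switches to $((B\vlan\vlne{C})\vlan\vlne{C})\vlor C$, and $(B\vlor\vec{D})\vlan(C\vlor\vlne{C})$ switches to $(\vec{D}\vlan\vlne{C})\vlor(B\vlor C)$; in both cases you now need a co-weakening $X\vlan Y\rightarrow X$ on compound formulas. $\mathsf{P}$ has no such rule: the only erasing equations are the projections~\eqref{align:axiom right weakening pi0} and~\eqref{align:axiom right weakening pi1}, and modulo those two equations every $\pi_j$-rule of $\mathsf{P}$ is an identity, so the projections cannot manufacture a weakening for $\vlan$ or $\vlor$. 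The up-rule fallback fails for the same reason: no rule of $\mathsf{P}\!\!\uparrow$ discards a conjunct either, so Theorem~\ref{theorem:Suparrow is admissible} does not realize $B\vlan\vlne{C}\Rightarrow B\vlor C$ as a derivation.

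The missing ingredient is a genuine auxiliary lemma rather than a rearrangement: \emph{for every $C\in\mathcal{F}_{\mathsf{P}}$ there is a derivation in $\mathsf{P}$ from $\bTT$ to $\bTT\vlor C$}. Prove it by induction on $C$: for literals and constants it is your equational chain $\bTT\vlor L=\bTT$; for $C=C_0\vlor C_1$ apply the hypothesis twice, nesting the second application inside the $\bTT$ produced by the first; for $C=C_0\vlan C_1$ pass from $\bTT=\bTT\vlan\bTT$ to $(C_0\vlor\bTT)\vlan(C_1\vlor\bTT)$ and apply $s\downarrow$ so that the two units pair with \emph{each other}, yielding $(C_0\vlan C_1)\vlor(\bTT\vlor\bTT)=(C_0\vlan C_1)\vlor\bTT$. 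With this lemma the mixed cases are immediate: derive $\bTT\vlor C$ from $\bTT$, then run the inductive derivation of $B\vlor\vec{D}$ inside the context $\Set{\ }\vlor C$ to obtain $(B\vlor C)\vlor\vec{D}$ up to associativity and commutativity; the mixed $\vlan$ case uses $\bTT\vlor\vlne{B}$ in the same way, and no detour through $B\vlan\vlne{C}$ is needed. One further small repair: your padding chain uses $\vlne{x}\vlor\vlne{x}=\vlne{x}$, which is not literally among the stated idempotence axioms nor among their negations (the negation of $x\vlor x=x$ is $\vlne{x}\vlan\vlne{x}=\vlne{x}$); appeal to Corollary~\ref{lemma:Generic idempotence in P} instead, or note that negated variables are variables of the ambient framework of Definition~\ref{definition:A possible generalization of Subatomic systems}.
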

\begin{proof}
We proceed by induction on the structure of $ A[x_1,\ldots,x_n] $.
\par
Let $ A[x_1,\ldots,x_n] $ be $ x_i $.
If $ \tau(l, x_i) = x_i $, then $ \mathcal{P} $ is
    $ \vlderivation{
        \vliq{}{}
        {\tau(l, x_i)\vlor\vlne{\tau(l, x_i)}}
        {\vliq{\eqref{align:axiom excluded middle p}}{}
            {x_i\vlor\vlne{x_i}}
            {\vlhy{\bTT}}}
    } $.
If $ \tau(l, x_i) = \vlne{x_i} $, then $ \mathcal{P} $ is
    $ \vlupsmash
      {
      {\vlderivation{
      \vliq{}{}
      {\vlne{\tau(l, x_i)}\vlor\tau(l, x_i)}
      {\vliq{\eqref{align:axiom excluded middle p}}{}
      {\vlne{x_i}\vlor x_i}
      {\vlhy{\bTT}}}
    }}} $.
Let $ A[x_1,\ldots,x_n] $ be $ \pi_i(x_1,\ldots,x_n) $. 
If $ \tau(l, \pi_i(x_1,\ldots,x_n)) = \pi_i(x_1,\ldots,x_n) $, then 
$ \mathcal{P} $ is
    $$\vlupsmash 
      {\vlderivation{
      \vliq{\textrm{Fact}~\ref{fact:arbitrary projections}}{}
      {\tau(l, \pi_i(x_1,\ldots,x_n))
            \vlor\vlne{\tau(l, \pi_i(x_1,\ldots,x_n))}}
      {\vliq{\eqref{align:axiom de morgan pi}}{}
      {\pi_i(\ldots,x_i,\ldots)\vlor\vlne{\pi_i(\ldots,x_i,\ldots)}}
      {\vliq{}{}
      {\pi_i(\ldots,x_i,\ldots)\vlor\pi_i(\ldots,\vlne{x_i},\ldots)}
      {\vliq{}{}
      {x_i\vlor\vlne{x_i}}
      {\vlhy{\bTT}}}}}
    } }
\enspace .$$
If $ A\in\Set{A_0\vlan A_1,A_0\vlor A_1 } $ it is enough to standardly apply 
the inductive hypothesis.
\end{proof}

\begin{theorem}[Completeness of $ \mathsf{P} $]
\label{proposition:Completeness of P wrt C1}
The Subatomic system-1.1 $ \mathsf{P} $ is complete for $ \mathsf{C_1} $.
\Ie, let $ A, B_1,\ldots,B_n \in \mathcal{F}_{\mathsf{P}}$ be such that $ B \in \mathsf{C_1}$ exists
and $ A $ is equivalent to $ B $, up to De Morgan equivalences.
Let us also assume that the truth of $ B_1\vlan \ldots\vlan B_n $ implies the truth of $ A $.
Then $ \vlderivation{\vlder{\mathcal{D}}{\mathsf{P}}{A\vlor \vlne{B_1}\vlor\ldots\vlor\vlne{B_n}}{\bTT}} $ exists.
\end{theorem}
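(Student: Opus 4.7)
The plan is to apply Proposition~\ref{proposition:Compactness of P} to every row of the truth table of $C := A \vlor \vlne{B_1} \vlor \ldots \vlor \vlne{B_n}$ and to assemble the resulting row-indexed proofs into a single derivation of $C$ from $\bTT$, proceeding by induction on the number $m$ of variables appearing in $C$. First I would observe that, by the hypothesis, $B_1 \vlan \ldots \vlan B_n$ classically implies $A$, so, modulo the De Morgan equivalences made available by Definition~\ref{definition: negation among formulas of P}, $C$ is a tautology. Consequently $\tau(l, C) = C$ for every row $l \in \Set{0,1}^m$, and Compactness supplies a derivation $\mathcal{P}_l$ from $\bTT$ to $C \vlor \vlne{R_l}$, where $R_l := \tau(l, x_1) \vlan \ldots \vlan \tau(l, x_m)$ is the $l$-th row conjunction.

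For the base case $m = 0$ Compactness already yields the desired derivation, because there is no row-specific negated literal to eliminate. For $m > 0$ I would single out the variable $x_m$ and consider the two formulas $C_{\bTT}$ and $C_{\bFF}$ obtained from $C$ by replacing the truth-relevant occurrences of $x_m$ by $\bTT$ and $\bFF$ respectively; both are tautologies with $m - 1$ variables, so the induction hypothesis provides derivations $\mathcal{D}_{\bTT}$ of $C_{\bTT}$ from $\bTT$ and $\mathcal{D}_{\bFF}$ of $C_{\bFF}$ from $\bTT$. Starting from $\bTT$, I would then apply Corollary~\ref{lemma:Generic excluded middle in P} to introduce $x_m \vlor \vlne{x_m}$ and run $\mathcal{D}_{\bTT}$ and $\mathcal{D}_{\bFF}$ at depth inside the two disjuncts -- which is legitimate because $\mathsf{P}$ is a deep inference system -- landing on $C_{\bTT} \vlor C_{\bFF}$. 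Finally, equational rewriting together with Corollary~\ref{lemma:Generic idempotence in P} would collapse $C_{\bTT} \vlor C_{\bFF}$ back to $C$.

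The hard part will be that last collapse from $C_{\bTT} \vlor C_{\bFF}$ back to $C$. Because $x_m$ may appear inside projection contexts ($\pi_0$ or $\pi_1$), where it is semantically inert but syntactically present by axioms~\eqref{align:axiom right weakening pi0} and~\eqref{align:axiom right weakening pi1}, the substitutions $x_m \mapsto \bTT$ and $x_m \mapsto \bFF$ must be chosen with care so that, once $x_m$ is reintroduced in each disjunct by the excluded-middle branch, the result can be equationally reshaped into $C$ with the leftover duplicated subformulas contracted through Idempotence. An alternative route that avoids any Shannon expansion would be to use the unit axiom to split $\bTT =_{\mathsf{P}} \bTT \vlan \ldots \vlan \bTT$ into $2^m$ copies, run one $\mathcal{P}_l$ per copy to reach $\bigwedge_l (C \vlor \vlne{R_l})$, and then realise the classical identity $\bigwedge_l (C \vlor \vlne{R_l}) = C \vlor \vlne{\bigvee_l R_l} = C \vlor \vlne{\bTT} = C$ by iterated applications of $s\downarrow$, $m\downarrow$ and Idempotence; in that route the obstacle shifts to deriving enough distributivity from the (restricted) medial rules of $\mathsf{P}\!\!\downarrow$.
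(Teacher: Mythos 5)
Your ``alternative route'' is, in essence, the proof the paper actually gives, and the worry you attach to it dissolves on inspection. The paper rewrites $\bTT$ as a $\vlan$-composition of $2^m$ copies of $\bTT$, applies Proposition~\ref{proposition:Compactness of P} to each copy to reach the conjunction of all the $X \vlor \vlne{R_l}$ with $X = A\vlor\vlne{B_1}\vlor\ldots\vlor\vlne{B_n}$ (using that $\tau(l,X)=X$ because $X$ is a tautology, exactly as you argue), and then eliminates the variables one at a time: pairing the two conjuncts whose row literals differ only in the last unresolved variable $y$, a single $s\downarrow$ turns $(X\vlor y)\vlan(X\vlor\vlne{y})$ into $(y\vlan\vlne{y})\vlor X\vlor X$, after which $y\vlan\vlne{y}=\bFF$ (the negation of axiom~\eqref{align:axiom excluded middle p}), axiom~\eqref{align:axiom unit of or}, and Corollary~\ref{lemma:Generic idempotence in P} leave a clause with one fewer literal; iterating yields $X$. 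No distributivity beyond the switch shape $s\downarrow$ is ever needed, so the obstacle you point to in that route is not one. By contrast, your primary route (Shannon expansion) has a genuine gap: after Corollary~\ref{lemma:Generic excluded middle in P} produces $x_m\vlor\vlne{x_m}$ from $\bTT$, the two disjuncts are the literals $x_m$ and $\vlne{x_m}$, not $\bTT$, so you cannot ``run $\mathcal{D}_{\bTT}$ and $\mathcal{D}_{\bFF}$ at depth inside'' them; moreover the collapse of $C_{\bTT}\vlor C_{\bFF}$ to $C$, which you yourself flag as the hard part, is never supplied, and substitution of constants for variables is not an operation that the congruence of Definition~\ref{definition: Congruence among formulas of P} makes available. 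If you drop the Shannon route, commit to the second one, and spell out the per-variable resolution step just described, you obtain the paper's proof.
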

\begin{proof}
The assumption saying that the truth of $ B_1\vlan \ldots\vlan B_n $ implies
the truth of $ A $ is equivalent to saying that $ A\vlor \vlne{B_1}\vlor \ldots\vlor \vlne{B_n} $
is a tautology.
To keep the proof readable we assume that $ x, y $ are all and only the free variables of 
$ A\vlor \vlne{B_1}\vlor \ldots\vlor \vlne{B_n} $. Of course, what we are going to do, works
for any finite set of variables in $ A\vlor \vlne{B_1}\vlor \ldots\vlor \vlne{B_n} $.
Proposition~\ref{proposition:Compactness of P} assures the existence of
$ 
  {\vlderivation{\vlder{\mathcal{P}_l}{\mathsf{P}}
  {\tau(l,X)
   \vlor\tau(l,x)\vlor \tau(l,y)}
  {\bTT}}}$ for every $ 1\leq l\leq 2^2 $, where $ X $ shortens
$ (A\vlor \vlne{B_1}\vlor \ldots\vlor \vlne{B_n})[x,y] $ and $ 2^2 $ is the 
number of lines that all the 
combinations of the literals $ x, \vlne{x}, y $ and $ \vlne{y} $ generate in the truth
table of $ \tau(l,X) $.
%
In fact, for every $ 1\leq l\leq 4 $, the proof $ \mathcal{P}_l $ has form
$ \vlderivation{\vlder{\mathcal{P}_l}{\mathsf{P}}
    {X\vlor\tau(l,x)\vlor\tau(l,y)}
    {\bTT}} $ because $ X$, \ie $A\vlor \vlne{B_1}\vlor \ldots\vlor \vlne{B_n} $,
is a tautology. So $ \mathcal{D} $ we are looking for is
$
   \vlderivation{
    \vlin{}{}
    {X}
    {\vliq{}{}
    {X\vlor X}
    {\vliq{}{}
    {\bFF\vlor X\vlor X}
    {\vlin{s\downarrow}{}
    {(y \vlan \vlne{y})\vlor X\vlor X}
    {\vliq{}{}
    {(X \vlor y)\vlan(X\vlor\vlne{y})}
    {\vlin{}{}
    {(X\vlor X\vlor y\vlor y)\vlan(X\vlor X\vlor \vlne{y}\vlor \vlne{y})}
    {\vliq{}{}
    {(\bFF\vlor(X\vlor X\vlor y\vlor y))\vlan(\bFF\vlor(X\vlor X\vlor \vlne{y}\vlor \vlne{y}))}
    {\vlin{s\downarrow,s\downarrow}{}
    {((x\vlan \vlne{x})\vlor(X\vlor X\vlor y\vlor y))\vlan((x\vlan \vlne{x})\vlor(X\vlor X\vlor \vlne{y}\vlor \vlne{y}))}
    {\vlde{}{}
    {((X\vlor x\vlor y)\vlan(X\vlor \vlne{x}\vlor y))\vlan((X\vlor x\vlor \vlne{y})\vlan(X\vlor \vlne{x}\vlor \vlne{y}))}
    {\vliq{}{}
    {\bTT\vlan\bTT\vlan\bTT\vlan\bTT}
    {\vlhy{\bTT}}}}}}}}}}}
}\enspace . $
\end{proof}

\vllinearfalse
\section{Conclusion and developments}
\label{section:conclusion}
This work highlights how much effective the work in 
\cite{Tubella2016,Tubella:2018:SPS:3176362.3173544} 
that aims at identifying the core 
mechanism of cut-elimination is. The notion of Subatomic system allows to prove 
modularly, generally and once the cut-elimination of interesting deep inference 
systems. We show that the original notion of subatomic system can be slightly 
generalized. This allows to identify new logical systems without any need to 
encode their constants sub-atomically and without loosing splitting, \ie 
cut-elimination. 
The Subatomic system-1.1 $ \mathsf{P} $, sound and complete for the tautologies 
of Post's clone $ \mathsf{C_1} $, is a witness of how that is possible. Of 
course, the introduction of $ \mathsf{P} $ is not breathtaking, 
but logical systems that smoothly incorporate self-dual operators --- in the 
case of $ \mathsf{P} $ they are operators as natural as projections --- and 
which keep maintaining good logical properties are not so common 
\cite{Gugl:06:A-System:kl,Roversi:TLCA11,Roversi01042016}.

On going work aims at using the framework of subatomic systems, may be upgraded 
to some version \texttt{x.y} --- this work introduces release 1.1 ---, for 
systematically identifying logical systems with good properties and, possibly, 
of some relevance. Saying it in another way, the idea is to use the pattern 
that subatomic systems suggest for contributing to systematic proof theory 
\cite{DBLP:conf/csl/CiabattoniST09}.
The following list of medial shapes should show to what extent this idea can be 
concrete and potentially interesting:
\begin{center}
    \hspace{-1.25cm}
 \begin{tabular}{cc}
        \newcommand{\localF}     {\vlne{\pi_0}}
        \newcommand{\localFp}    {\pi_1}
        \newcommand{\localGlb}   {\rightarrow}
        \newcommand{\localLub}   {\vlne{\leftarrow}}
        \newcommand{\localnotF}  {\vlne{\pi_0}}
        \newcommand{\localnotFp} {\pi_1}
        \newcommand{\localnotGlb}{\vlne{\rightarrow}}
        \newcommand{\localnotLub}{\leftarrow}
        \begin{tabular}{cc}
            $\vlinf{ai_0\downarrow}{} 
            {(A \vlbin{\localF} C)\vlbin{\localGlb}(B \vlbin{\localnotF} D)}
            {(A \vlbin{\localGlb} B)\vlbin{\localF}(C \vlbin{\localGlb} D)}$
            &
            $\vlinf{ai_0\uparrow}{} 
            {(A \vlbin{\localnotGlb} B)\vlbin{\localnotF}(C 
            \vlbin{\localnotGlb} D)}
            {(A \vlbin{\localnotF} C)\vlbin{\localnotGlb}(B \vlbin{\localF} D)}$
            \\\\
            $\vlinf{ai_1\downarrow}{} 
            {(A \vlbin{\localFp} C)\vlbin{\localGlb}(B \vlbin{\localnotFp} D)}
            {(A \vlbin{\localGlb} B)\vlbin{\localFp}(C \vlbin{\localGlb} D)}$
            &
            $\vlinf{ai_1\uparrow}{} 
            {(A \vlbin{\localnotGlb} B)\vlbin{\localnotFp}(C 
            \vlbin{\localnotGlb} D)}
            {(A \vlbin{\localnotFp} C)\vlbin{\localnotGlb}(B \vlbin{\localFp} 
            D)}$
            \\\\
            $\vlinf{s\downarrow}{}
            {(A \vlbin{\localLub} C)\vlbin{\localGlb}(B \vlbin{\localnotLub} 
D)}
            {(A \vlbin{\localGlb} B)\vlbin{\localLub}(C \vlbin{\localGlb}    
            D)}$
            &
            $\vlinf{s\uparrow}{}
            {(A \vlbin{\localnotGlb} B)\vlbin{\localnotLub}(C 
            \vlbin{\localnotGlb} D)}
            {(A \vlbin{\localnotLub} C)\vlbin{\localnotGlb}(B 
            \vlbin{\localLub}    D)}$
            \\\\
            $\vlinf{c_0\downarrow}{}
            {(A \vlbin{\localGlb} C)\vlbin{\localF  }(B \vlbin{\localGlb} D)} 
            {(A \vlbin{\localF  } B)\vlbin{\localGlb}(C \vlbin{\localF  } D)}$
            &
            $\vlinf{c_0\uparrow}{}
            {(A \vlbin{\localnotF  } B)\vlbin{\localnotGlb}(C 
            \vlbin{\localnotF  } D)} 
            {(A \vlbin{\localnotGlb} C)\vlbin{\localnotF  }(B 
            \vlbin{\localnotGlb} D)}$
        \end{tabular}
&   \hspace{-1cm}
        \newcommand{\localF}     {\pi_0}
        \newcommand{\localFp}    {\vlne{\pi_1}}
        \newcommand{\localGlb}   {\leftarrow}
        \newcommand{\localLub}   {\vlne{\rightarrow}}
        \newcommand{\localnotF}  {\pi_0}
        \newcommand{\localnotFp} {\vlne{\pi_1}}
        \newcommand{\localnotGlb}{\vlne{\rightarrow}}
        \newcommand{\localnotLub}{\rightarrow}
        %
        \begin{tabular}{cc}
            $\vlinf{ai_0\downarrow}{} 
            {(A \vlbin{\localF} C)\vlbin{\localGlb}(B \vlbin{\localnotF} D)}
            {(A \vlbin{\localGlb} B)\vlbin{\localF}(C \vlbin{\localGlb} D)}$
            &
            $\vlinf{ai_0\uparrow}{} 
            {(A \vlbin{\localnotGlb} B)\vlbin{\localnotF}(C 
            \vlbin{\localnotGlb} D)}
            {(A \vlbin{\localnotF} C)\vlbin{\localnotGlb}(B \vlbin{\localF} D)}$
            \\\\
            $\vlinf{ai_1\downarrow}{} 
            {(A \vlbin{\localFp} C)\vlbin{\localGlb}(B \vlbin{\localnotFp} D)}
            {(A \vlbin{\localGlb} B)\vlbin{\localFp}(C \vlbin{\localGlb} D)}$
            &
            $\vlinf{ai_1\uparrow}{} 
            {(A \vlbin{\localnotGlb} B)\vlbin{\localnotFp}(C 
            \vlbin{\localnotGlb} D)}
            {(A \vlbin{\localnotFp} C)\vlbin{\localnotGlb}(B \vlbin{\localFp} 
            D)}$
            \\\\
            $\vlinf{s\downarrow}{}
            {(A \vlbin{\localLub} C)\vlbin{\localGlb}(B \vlbin{\localnotLub} 
D)}
            {(A \vlbin{\localGlb} B)\vlbin{\localLub}(C \vlbin{\localGlb}    
            D)}$
            &
            $\vlinf{s\uparrow}{}
            {(A \vlbin{\localnotGlb} B)\vlbin{\localnotLub}(C 
            \vlbin{\localnotGlb} D)}
            {(A \vlbin{\localnotLub} C)\vlbin{\localnotGlb}(B 
            \vlbin{\localLub}    D)}$
            \\\\
            $\vlinf{c_0\downarrow}{}
            {(A \vlbin{\localGlb} C)\vlbin{\localF  }(B \vlbin{\localGlb} D)} 
            {(A \vlbin{\localF  } B)\vlbin{\localGlb}(C \vlbin{\localF  } D)}$
            &
            $\vlinf{c_0\uparrow}{}
            {(A \vlbin{\localnotF  } B)\vlbin{\localnotGlb}(C 
            \vlbin{\localnotF  } D)} 
            {(A \vlbin{\localnotGlb} C)\vlbin{\localnotF  }(B 
            \vlbin{\localnotGlb} D)}$
        \end{tabular}
\end{tabular}
\end{center}

\begin{center}
    \hspace{-1.25cm}
    \begin{tabular}{cc}
\newcommand{\localF}     {\vlne{\pi_0}}
\newcommand{\localFp}    {\pi_1}
\newcommand{\localGlb}   {\rightarrow}
\newcommand{\localLub}   {\vlne{\leftarrow}}
\newcommand{\localnotF}  {\vlne{\pi_0}}
\newcommand{\localnotFp} {\pi_1}
\newcommand{\localnotGlb}{\vlne{\rightarrow}}
\newcommand{\localnotLub}{\leftarrow}
\begin{tabular}{cc}
    $\vlinf{c_1\downarrow}{}
    {(A \vlbin{\localGlb} C)\vlbin{\localFp }(B \vlbin{\localGlb} D)} 
    {(A \vlbin{\localFp } B)\vlbin{\localGlb}(C \vlbin{\localFp } D)}$
    &
    $\vlinf{c_1\uparrow}{}
    {(A \vlbin{\localnotFp } B)\vlbin{\localnotGlb}(C 
        \vlbin{\localnotFp } D)} 
    {(A \vlbin{\localnotGlb} C)\vlbin{\localnotFp }(B 
        \vlbin{\localnotGlb} D)}$
\end{tabular}
&   \hspace{-1cm}
\newcommand{\localF}     {\pi_0}
\newcommand{\localFp}    {\vlne{\pi_1}}
\newcommand{\localGlb}   {\leftarrow}
\newcommand{\localLub}   {\vlne{\rightarrow}}
\newcommand{\localnotF}  {\pi_0}
\newcommand{\localnotFp} {\vlne{\pi_1}}
\newcommand{\localnotGlb}{\vlne{\rightarrow}}
\newcommand{\localnotLub}{\rightarrow}
%
\begin{tabular}{cc}
    $\vlinf{c_1\downarrow}{}
    {(A \vlbin{\localGlb} C)\vlbin{\localFp }(B \vlbin{\localGlb} D)} 
    {(A \vlbin{\localFp } B)\vlbin{\localGlb}(C \vlbin{\localFp } D)}$
    &
    $\vlinf{c_1\uparrow}{}
    {(A \vlbin{\localnotFp } B)\vlbin{\localnotGlb}(C 
        \vlbin{\localnotFp } D)} 
    {(A \vlbin{\localnotGlb} C)\vlbin{\localnotFp }(B 
        \vlbin{\localnotGlb} D)}$ 
\end{tabular}
\end{tabular}
\end{center}
\noindent
The whole list is candidate to become a subatomic system 
$ \mathsf{R_{23}} $. Endowed it with the right equational theory among the 
propositional logic formulas that the rules infer, $ \mathsf{R_{23}} $ should 
derive tautologies of $ \mathsf{C_2} \cup \mathsf{C_3} $ we recall in 
Section~\ref{section:Subatomic systems and Post's Lattice}.
Lewis shows that the satisfiability of formulas
of $ \mathsf{C_2} $ belongs to P-Time problems while the satisfiability of
formulas in $ \mathsf{C_3} $ is NP-Time 
complete~\cite{DBLP:journals/mst/Lewis79}. So, $ \mathsf{R_{23}} $ would 
be a logical system where looking for proof-theoretical properties that highlight the 
phase transition from P-Time to NP-Time complete satisfiability.

The rules of $ \mathsf{R_{23}} $ come from the following complete 
sub-lattice:
\begin{center}
    \begin{tikzpicture}[scale=.55]
    \begin{pgfonlayer}{nodelayer}
    \node [style=none] (imp)    at (-4.750,-3.0) {
        \resizebox{.125\textwidth}{!}{
            \begin{tabular}{cc|c|c|c|l}
            & & \multicolumn{2}{ c| }{$ B $}        \\ 
            \multicolumn{2}{c|}{$ \vlne{\leftarrow} $} 
            & \bFF & \bTT \\ \hline
            \multicolumn{1}{ c  }{\multirow{2}{*}{$ A$} } 
            & \multicolumn{1}{ c| }{\bFF} & \bFF & \bTT \\ \cline{2-4}
            \multicolumn{1}{ c  }{} 
            & \multicolumn{1}{ c| }{\bTT} & \bFF & \bFF \\ \cline{1-4}
            \end{tabular}	
    }};
    \node [style=none] (coimp)  at ( 4.750,-3.0) {
        \resizebox{.125\textwidth}{!}{
            \begin{tabular}{cc|c|c|c|l}
            & & \multicolumn{2}{ c| }{$ B $}        \\ 
            \multicolumn{2}{c|}{$ \vlne{\rightarrow} \,\equiv\,f_4 $} 
            & \bFF & \bTT \\ \hline
            \multicolumn{1}{ c  }{\multirow{2}{*}{$ A $} } 
            & \multicolumn{1}{ c| }{\bFF} & \bFF & \bFF \\ \cline{2-4}
            \multicolumn{1}{ c  }{} 
            & \multicolumn{1}{ c| }{\bTT} & \bTT & \bFF \\ \cline{1-4}
            \end{tabular}	
    }};
    
    \node [style=none] (pi0n)   at (-7.25, 0.615) {
        \resizebox{.125\textwidth}{!}{
            \begin{tabular}{cc|c|c|c|l}
            & & \multicolumn{2}{ c| }{$ B$}        \\ 
            \multicolumn{2}{c|}{$ \vlne{\pi_0}$} 
            & \bFF & \bTT \\ \hline
            \multicolumn{1}{ c  }{\multirow{2}{*}{$ A $} } 
            & \multicolumn{1}{ c| }{\bFF} & \bTT & \bTT \\ \cline{2-4}
            \multicolumn{1}{ c  }{} 
            & \multicolumn{1}{ c| }{\bTT} & \bFF & \bFF \\ \cline{1-4}
            \end{tabular}	
    }};
    \node [style=none] (pi1)    at (-2.75, 0.615) {
        \resizebox{.125\textwidth}{!}{
            \begin{tabular}{cc|c|c|c|l}
            & & \multicolumn{2}{ c| }{$ B $}        \\ 
            \multicolumn{2}{c|}{$ \pi_1 $} 
            & \bFF & \bTT \\ \hline
            \multicolumn{1}{ c  }{\multirow{2}{*}{$ A $} } 
            & \multicolumn{1}{ c| }{\bFF} & \bFF & \bTT \\ \cline{2-4}
            \multicolumn{1}{ c  }{} 
            & \multicolumn{1}{ c| }{\bTT} & \bFF & \bTT \\ \cline{1-4}
            \end{tabular}	
    }};
    
    \node [style=none] (ff)     at ( 0.0 ,-6.0) {
        \resizebox{.125\textwidth}{!}{
            \begin{tabular}{cc|c|c|c|l}
            & & \multicolumn{2}{ c| }{$ B $}        \\ 
            \multicolumn{2}{c|}{$ \bFF $} 
            & \bFF & \bTT \\ \hline
            \multicolumn{1}{ c  }{\multirow{2}{*}{$ A $} } 
            & \multicolumn{1}{ c| }{\bFF} & \bFF & \bFF \\ \cline{2-4}
            \multicolumn{1}{ c  }{} 
            & \multicolumn{1}{ c| }{\bTT} & \bFF & \bFF \\ \cline{1-4}
            \end{tabular}	
    }};
    \node [style=none] (tt)    at ( 0.0 , 7.0) {
        \resizebox{.125\textwidth}{!}{
            \begin{tabular}{cc|c|c|c|l}
            & & \multicolumn{2}{ c| }{$ A $}        \\ 
            \multicolumn{2}{c|}{$ \bTT$} 
            & \bFF & \bTT \\ \hline
            \multicolumn{1}{ c  }{\multirow{2}{*}{$ B $} } 
            & \multicolumn{1}{ c| }{\bFF} & \bTT & \bTT \\ \cline{2-4}
            \multicolumn{1}{ c  }{} 
            & \multicolumn{1}{ c| }{\bTT} & \bTT & \bTT \\ \cline{1-4}
            \end{tabular}	
    }};
    
    \node [style=none] (pi0)    at ( 2.75, 0.615) {
        \resizebox{.125\textwidth}{!}{
            \begin{tabular}{cc|c|c|c|l}
            & & \multicolumn{2}{ c| }{$ B $}        \\ 
            \multicolumn{2}{c|}{$ \pi_0$} 
            & \bFF & \bTT \\ \hline
            \multicolumn{1}{ c  }{\multirow{2}{*}{$ A $} } 
            & \multicolumn{1}{ c| }{\bFF} & \bFF & \bFF \\ \cline{2-4}
            \multicolumn{1}{ c  }{} 
            & \multicolumn{1}{ c| }{\bTT} & \bTT & \bTT \\ \cline{1-4}
            \end{tabular}	
    }};
    \node [style=none] (pi1n)   at ( 7.25, 0.615) {
        \resizebox{.125\textwidth}{!}{
            \begin{tabular}{cc|c|c|c|l}
            & & \multicolumn{2}{ c| }{$ B $}        \\ 
            \multicolumn{2}{c|}{$ \vlne{\pi_1} $} 
            & \bFF & \bTT \\ \hline
            \multicolumn{1}{ c  }{\multirow{2}{*}{$ A $} } 
            & \multicolumn{1}{ c| }{\bFF} & \bTT & \bFF \\ \cline{2-4}
            \multicolumn{1}{ c  }{} 
            & \multicolumn{1}{ c| }{\bTT} & \bTT & \bFF \\ \cline{1-4}
            \end{tabular}	
    }};
    
    \node [style=none] (conimp)   at (-4.750, 4.0) {
        \resizebox{.125\textwidth}{!}{
            \begin{tabular}{cc|c|c|c|l}
            & & \multicolumn{2}{ c| }{$ B $}        \\ 
            \multicolumn{2}{c|}{$ \rightarrow$} 
            & \bFF & \bTT \\ \hline
            \multicolumn{1}{ c  }{\multirow{2}{*}{$ A $} } 
            & \multicolumn{1}{ c| }{\bFF} & \bTT & \bTT \\ \cline{2-4}
            \multicolumn{1}{ c  }{} 
            & \multicolumn{1}{ c| }{\bTT} & \bFF & \bTT \\ \cline{1-4}
            \end{tabular}	
    }};
    \node [style=none] (nimp) at ( 4.750, 4.0) {
        \resizebox{.125\textwidth}{!}{
            \begin{tabular}{cc|c|c|c|l}
            & & \multicolumn{2}{ c| }{$ B $}        \\ 
            \multicolumn{2}{c|}{$ \leftarrow$} 
            & \bFF & \bTT \\ \hline
            \multicolumn{1}{ c  }{\multirow{2}{*}{$ A $} } 
            & \multicolumn{1}{ c| }{\bFF} & \bTT & \bFF \\ \cline{2-4}
            \multicolumn{1}{ c  }{} 
            & \multicolumn{1}{ c| }{\bTT} & \bTT & \bTT \\ \cline{1-4}
            \end{tabular}	
    }};
    \end{pgfonlayer}
    \begin{pgfonlayer}{edgelayer}
    \draw [style=arrow, color=mygreen]  (imp) to (pi0n);
    \draw [style=arrow, color=mygreen]  (imp) to (pi1);
    \draw [style=arrow, color=mygreen]  (pi0n) to (conimp);
    \draw [style=arrow, color=mygreen]  (pi1) to (conimp);
    
    
    \draw [style=arrow, color=orange]  (coimp) to (pi0);
    \draw [style=arrow, color=orange]  (coimp) to (pi1n);
    \draw [style=arrow, color=orange]  (pi0) to (nimp);
    \draw [style=arrow, color=orange]  (pi1n) to (nimp);
    
    \draw [style=arrow, color=blue]  (conimp) to (tt);
    \draw [style=arrow, color=blue]  (nimp) to (tt);
    \draw [style=arrow, color=blue]  (ff) to (imp);
    \draw [style=arrow, color=blue]  (ff) to (coimp);
    \end{pgfonlayer}
    \end{tikzpicture}
\end{center}
which is inside the complete lattice of binary boolean functions pointwise 
ordered in accordance with the convention that $ \bFF $ is smaller than $ \bTT $.
The lattice shows that it is natural to work with more than one weak relation in the 
same system.
Both $ \vlne{\leftarrow} $ and $ \vlne{\rightarrow} $ are weak and play the same role as 
that played by $ \vlan $ in the lattice that drives the definition of 
$ \mathsf{P} $ in Section~\ref{section:The system P}. Two weak relations are required 
because the negation $ \leftarrow  $ of $ \vlne{\leftarrow} $ is the least upper 
bound of $ \pi_0 $ and $ \vlne{\pi_1} $ and not of $ \vlne{\pi_0} $ and $ \pi_1 $
of which $ \vlne{\leftarrow} $ is greatest lower bound. Of course, symmetrically, the 
same observation holds for $ \vlne{\rightarrow} $.
 
The lattice here above should immediately suggests that the search of subatomic systems 
need not be confined to the set of sixteen two-valued boolean functions. For any $ 
k\geq 3 $, the use of $ k $-valued operators as relations for subatomic systems is 
perfectly viable. For example, the subatomic system that corresponds to the  
paradigmatic deep inference system $ \mathsf{BV} $ \cite{Gugl:06:A-System:kl} can be 
seen as a system that uses 3-valued operators that define Coherence Spaces 
\cite{Girard:1989:PT:64805}. Considered the huge number of $ k $-valued operators, as $ 
k $ grows, subatomic systems look like grammars that generate specific languages, \ie 
logical systems, with good proof theoretical properties, of possible unexpected 
interest, as consequence of the consistent use of non standard logical operators.
This should definitely make it evident the contribution that the introduction of 
Subatomic systems-1.1 can give to Systematic Proof Theory.



\bibliography{bibliography}

\end{document}